\newcounter{thm}
\theoremstyle{plain}
\newtheorem{theorem}[thm]{Theorem}
\newtheorem{corollary}{Corollary}
\newtheorem{lemma}{Lemma}
\newtheorem{prop}{Proposition}
\theoremstyle{definition}
\newtheorem{definition} {Definition}
\newcommand{\beq}{\begin{equation}}
\newcommand{\eeq}{\end{equation}}
\newcommand{\ket} [1] {\vert #1 \rangle}
\newcommand{\bra} [1] {\langle #1 \vert}
\newcommand{\Tr}{\text{Tr}}
\newcommand{\ba}{\begin{align}}
\newcommand{\ea}{\end{align}}
\newcommand{\bea}{\begin{eqnarray}}
\newcommand{\eea}{\end{eqnarray}}
 \definecolor{BLACK}{gray}{0}
 \definecolor{WHITE}{gray}{1}
 \definecolor{RED}{rgb}{1,0,0}
 \definecolor{GREEN}{rgb}{0,.6,0}
 \definecolor{BLUE}{rgb}{0,0,1}
 \definecolor{CYAN}{cmyk}{1,0,0,0}
 \definecolor{MAGENTA}{cmyk}{0,1,0,0}
 \definecolor{YELLOW}{cmyk}{0,0,1,0}
\def\id{I}
\def\1{\mat{\id}}
\def\mat#1{\vec{#1}}
\renewcommand{\vec}[1]{\bm{\mathrm{#1}}}
\renewcommand{\sout}[1]{}
\begin{document}
\title{Quantum noise protects quantum classifiers against adversaries}

\date{\today}

\author{Yuxuan Du}
\affiliation{UBTECH Sydney AI Centre, School of Computer Science, Faculty of Engineering, University of Sydney, Australia}
\author{Min-Hsiu Hsieh}
\affiliation{Centre for Quantum Software and Information, Faculty of Engineering and Information Technology, University of Technology Sydney, Australia}
\author{Tongliang Liu}
\affiliation{UBTECH Sydney AI Centre, School of Computer Science, Faculty of Engineering, University of Sydney, Australia}
\author{Dacheng Tao}
\affiliation{UBTECH Sydney AI Centre, School of Computer Science, Faculty of Engineering, University of Sydney, Australia}
\author{Nana Liu}
\email{Nana.Liu@quantumlah.org}
\affiliation{John Hopcroft Center for Computer Science, Shanghai Jiao Tong University, China}

\begin{abstract}

Noise in quantum information processing is often viewed as a disruptive and difficult-to-avoid feature, especially in near-term quantum technologies. However, noise has often played beneficial roles, from enhancing weak signals in stochastic resonance to protecting the privacy of data in differential privacy. It is then natural to ask, can we harness the power of quantum noise that is beneficial to quantum computing? An important current direction for quantum computing is its application to machine learning, such as  classification problems. One outstanding problem in machine learning for classification is its sensitivity to adversarial examples. These are small, undetectable perturbations from the original data where the perturbed data is completely misclassified in otherwise extremely accurate classifiers. They can also be considered as `worst-case' perturbations by unknown noise sources. We show that by taking advantage of depolarisation noise in quantum circuits for classification, a robustness bound against adversaries can be derived where the robustness improves with increasing noise. This robustness property is intimately connected with an important security concept called differential privacy which can be extended to quantum differential privacy. For the protection of quantum data, this is the first quantum protocol that can be used against the most general adversaries. Furthermore, we show how the robustness in the classical case can be sensitive to the details of the classification model, but in the quantum case the details of classification model are absent, thus also providing a potential quantum advantage for classical data that is independent of quantum speedups. This opens the opportunity to explore other ways in which quantum noise can be used in our favour, as well as identifying other ways quantum algorithms can be helpful that is independent of quantum speedups. 

\end{abstract}
\maketitle 
\section{Introduction}
Noise in quantum information processing has long been viewed as a feature to avoid and remove, notably in quantum computation. However, in the Noisy Intermediate-Scale Quantum (NISQ) era of near-term quantum computing~\cite{preskill2018quantum}, the presence of noise is inevitable. The focus is both on reducing the effects of quantum noise, for example using error-mitigation~\cite{endo2018practical,temme2017error} and for finding protocols whose integrity can nevertheless withstand this noise. However, a parallel approach can be taken to instead study noise under a positive lens. In classical information processing, noise is actively leveraged in many applications including strengthening security and privacy using differential privacy~\cite{dwork2011differential}, enhancing weak signals using stochastic resonance~\cite{gammaitoni1998stochastic}, improving signal resolution after truncating data with dithering~\cite{roberts1962picture} and speeding convergence rates in neural networks~\cite{jim1995effects}. Can we look at quantum noise in this same positive light and use it to our advantage? 

One important proposed application of these quantum devices is performing machine-learning tasks like classification ~\cite{biamonte2017quantum, grant2018hierarchical} and classification algorithms can be less vulnerable against noise.  An intuitive reason behind this is that classification only has few possible outputs and machine learning can still provide accurate classification in the classical world despite the `messiness' of real-life data like images and sound recordings. Indeed, a recent work~\cite{larose2020robust} showed how quantum binary classifiers can be made robust against common sources of quantum noise by choosing a right encoding of classical data into quantum states.  

However, despite being tolerant to small amounts of noise with known sources, classification algorithms are generally not protected against unknown `worst-case' noise sources, such as adversarial attacks. In fact, classification algorithms in machine learning are often very sensitive to adversarial attacks and this presents a key obstacle for the future development of classical machine learning~\cite{szegedy2013intriguing}. These adversaries perturb the original data point by only a small undetectable amount, yet the new datapoint, known as an adversarial example, is completely misclassified in otherwise extremely accurate classifiers. This observation presents an impetus for the vibrant field called adversarial machine learning~\cite{huang2011adversarial, kurakin2016adversarial} and this has recently been extended to the quantum domain in adversarial quantum learning~\cite{wiebe2018hardening, liu2019vulnerability,lu2019quantum}. While many important methods focus on finding new and more robust versions of existing algorithms~\cite{goodfellow2018making}, including on quantum devices~\cite{wiebe2018hardening, lu2019quantum}, this approach is generally vulnerable to counterattacks and don't provide theoretical guarantees against all possible adversaries~\cite{yuan2019adversarial}. 

We take a different approach that does not require inventing new algorithms to improve robustness, yet can provide a robustness guarantee against any unknown perturbation, such as from an adversary. We begin from our intuition that noise is a kind of scrambling mechanism. It can `scramble' the effects of disturbances made to one's original data, for instance by adversaries, thus diminishing the effects adversarial attacks can have. Therefore we can ask whether noise, instead of hindering the computation, can in fact assist in the presence of adversarial attacks? 

More specifically, noise in the classical realm has been associated with improving the privacy of algorithms, providing a property called differential privacy~\cite{dwork2011differential}. Differential privacy is the property of an algorithm whose output cannot distinguish small changes in the initial dataset, like the presence or absence of one party's datapoint, hence in this way preserving privacy of that party. This is in fact the very property we want in making our algorithm robust against adversarial examples, which are small changes to the initial dataset that induce misclassification. 

We demonstrate that by including depolarisation in one's quantum circuit for classification, we can achieve quantum differential privacy and in turn, be able to provide robustness bounds in the presence of adversaries which were not possible before. This is the most natural mechanism to exploit noise to protect quantum data, which appear in condensed matter systems, quantum communication networks, quantum simulation, quantum metrology and quantum control. In addition, we show how the robustness bound in the classical case can be sensitive to the details of the classification model but in the quantum case this bound is dependent only on the number of possible class categories and no other feature of the classification model. This therefore demonstrates an important example of a security advantage in performing a classification algorithm on a quantum device versus a purely classical device, for both quantum and classical data. 

We begin by defining classification, adversarial examples and differential privacy. Then we demonstrate how adding depolarisation noise in quantum classifiers can induce quantum differential privacy which can in turn provide protection against adversarial examples. 
\section{Background}\label{sec:bac}
We briefly review the classification problem in both the classical and quantum domains before introducing the concept of adversarial examples. We then define classical and quantum differential privacy, which we later employ as a key tool to achieve robustness of our classifier against adversarial examples.
\subsection{Classification task} \label{subsec:classification}
A classification task is a mapping from a set of classical or quantum input states to a label chosen from a finite set.  If the size of this finite set is $K\geq 2$, we have a $K$-multiclass classification problem  \cite{goodfellow2016deep}. $K=2$ is the special case of binary classification, e.g., given images of only ants or cicadas, to decide which picture belongs to which insect. 
\begin{definition}[$K$-multiclass classification]\label{def:K_multi}
The algorithm $\mathcal{A}:\Sigma \rightarrow \mathcal{C}$ is called a $K$-multiclass classification algorithm if it maps the set of input states $\Sigma$ onto the set $\mathcal{C}=\{0,...,K-1\}$. Let the state $\sigma \in \Sigma$ and $C\in \mathcal{C}$. If $\mathcal{A}(\sigma)=C$, then $C$ is the predicted class label assigned to $\sigma$.
\end{definition}
In machine learning, the algorithm $\mathcal{A}$ does not need to be pre-defined and can instead be learned through a training dataset $\mathcal{D}$. This dataset $\mathcal{D}=\{\sigma_i, \vec Y (\sigma_i)\}_{i=1}^M$ consists of $M$ pairs of input states $\sigma_i$ and their corresponding class labels represented by the $K$-dimensional vector $\vec Y (\sigma_i)$. Its $k^{\text{th}}$ entry $\vec Y_k(\sigma_i)=1$ if the class label of $\sigma_i$ is $k$ and every other entry of $\vec Y_k(\sigma_i)$ is zero otherwise. To learn $\mathcal{A}$, we first define a parameterised function $f(\vec \theta, \sigma_i) \in  \mathbb{R}^K$ where $\vec \theta$ are free parameters that can be tuned. The learning happens as $\vec \theta$ is optimized to minimize the empirical risk
\begin{align}\label{eqn:clc_bina}
	\min_{\vec{\theta}}\frac{1}{M}\sum_{i=1}^M \mathcal{L}(f(\vec{\theta}, \sigma_i), \vec Y(\sigma_i)),
\end{align} 
where $\mathcal{L}$ refers to a predefined loss function. The goal in learning is to minimise this empirical risk Eq.~\eqref{eqn:clc_bina} for one's given training dataset $\mathcal{D}$, where the optimized parameters are denoted $\vec{\theta}^*$. Given test state $\sigma$, we can define $\vec y(\sigma)=f(\theta^*, \sigma)/\|f(\theta^*, \sigma)\|_1$ as the score vector among $K$ labels, where $\|\cdot\|_1$ denotes the $l_1$-norm and $\vec y(\sigma)\in\mathbb{R}^K$ is the normalized vector of $f(\theta^*, \sigma)$. Then the $k^{\text{th}}$ entry of the vector function $f(\vec{\theta}^*, \sigma)=\vec y_k(\sigma)\in [0,1]$ can be interpreted as the probability that $\sigma$ is assigned the label $k$. Then the learned classification algorithm $\mathcal{A}$ outputs the class label $C$ for a input state $\sigma$ using the condition
\begin{align}\label{eqn:pred}
	C \equiv \arg\max_k  \vec y_k(\sigma) \equiv \mathcal{A}(\sigma), 
\end{align}
where the final class label $C$ is decided by identifying the class label with the highest corresponding probability.

For the quantum $K$-multiclass classification task with quantum test state $\sigma$ we can employ a quantum circuit, see Fig.~\ref{fig:Fig1}(a), to compute $\vec y(\sigma)$ instead of using a classical circuit. We can identify $\vec y_k(\sigma)$ to be the probability of the final measurement outcome of the quantum circuit being $k$, 
\begin{align} \label{eq:yk1}
	\vec y_k(\sigma) = \Tr(\Pi_k\mathcal{E}(\sigma \otimes |a\rangle \langle a|)),
\end{align}
where $\Pi_k$ is a POVM, $\mathcal{E}$ is a quantum operation that contains information about the trained parameters $\vec \theta^*$ \cite{benedetti2019parameterized} and $|a\rangle \langle a|$ is an ancilla.
\begin{figure}[h!]
\centering
\includegraphics[width=0.45\textwidth]{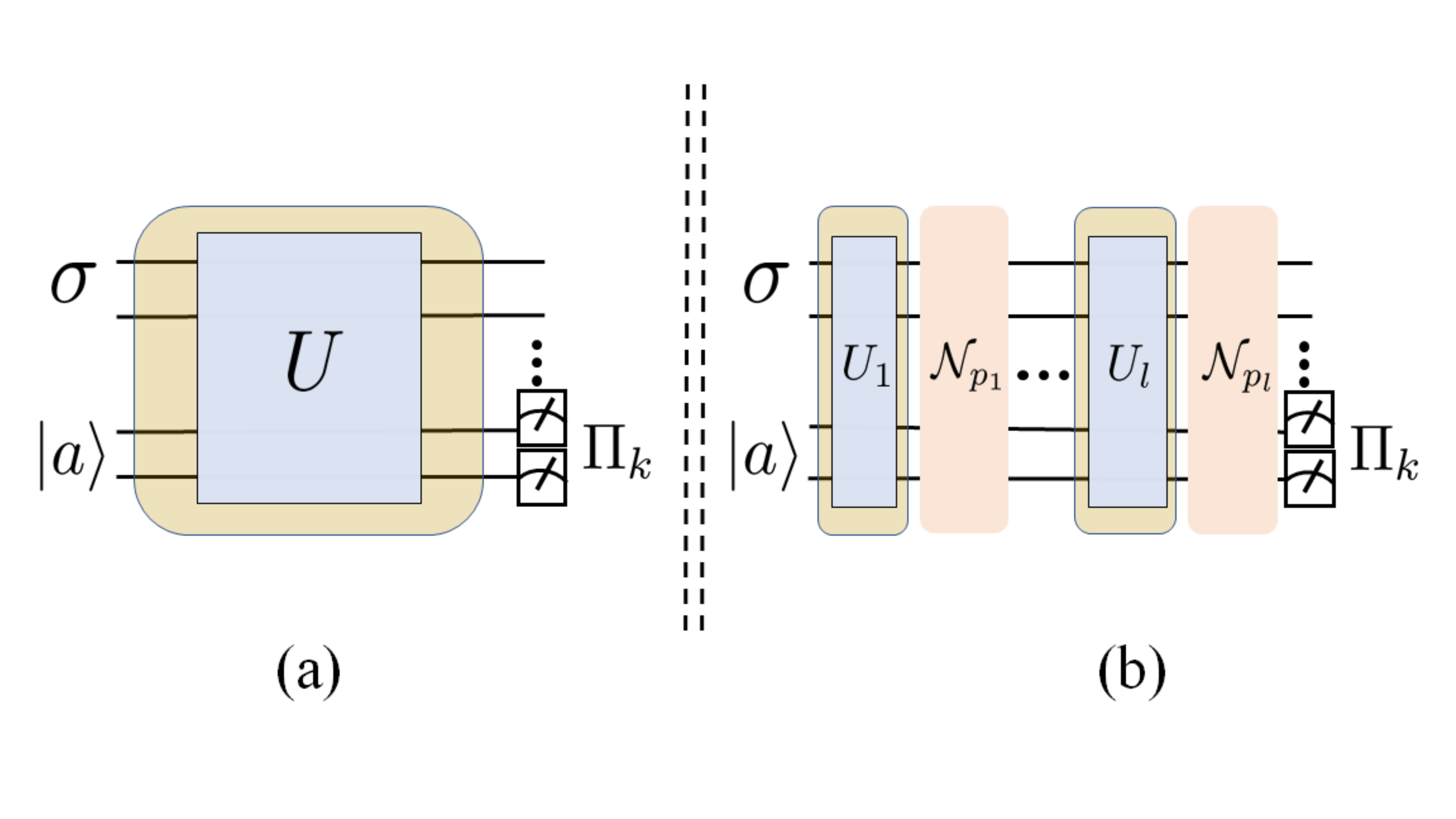}
\caption{\label{fig:Fig1} (a) A generic quantum circuit to estimate $\vec y_k(\sigma)$, which is the probability that test state $\sigma$ is assigned a class label $k$ in a $K$-multiclass classification problem. $|a\rangle$ is an ancilla state where $\sigma \otimes |a\rangle \langle a|$ is $D$-dimensional and $\Pi_k$ is $D_{meas}$-dimensional, where $D_{meas}\geq K$. With finite $N$ measurements at the output, one obtains an estimate $\vec y^{(N)}_k(\sigma)$ for $\vec y_k(\sigma)$. (b) Adding depolarisation noise channels $\mathcal{N}_{p_i}$ along the circuit, where $i=1,...,l$, the output in the $N \rightarrow \infty$ sampling limit becomes $\tilde{\vec y}_k(\sigma)$. With finite $N$ measurements at the output, one obtains the estimate $\tilde{\vec y}^{(N)}_k(\sigma)$. See text for details.}
\end{figure}
However, precise values of the probabilities $\vec y_k(\sigma)$ can only be obtained in the infinite sampling regime. This means that if only $N$ measurements are allowed at the output of the circuit, we can only obtain an estimated value $\vec y^{(N)}_k(\sigma)$ of the output probabilities. 
\subsection{Adversarial examples}
Adversarial examples are attacks on input examples to classification problems that lead to misclassification. In particular, these include worst-case attacks where the adversary can craft small imperceptible perturbations $\sigma \rightarrow \rho$ about a given correctly classified input $\sigma$ that result in misclassification \cite{goodfellow2014explaining}. This means that while the true labels $\sigma$ and $\rho$ are identical, if $\rho$ is an adversarial example, $\mathcal{A}$ will class them differently. We can define adversarial examples more formally as follows \cite{sharif2018suitability}. 
\begin{definition}[Adversarial example]
Suppose we are given a well-trained classification function $\mathcal{A}(\cdot)$ as defined in Eq.~\eqref{eqn:pred}, an input example $(\sigma, C)$, a distance metric $h(\cdot,\cdot)$ and a small enough threshold value $L$. Then $\rho$ is said to be an adversarial example if the following is true
	\begin{align}
		(\mathcal{A}(\sigma)=C) \wedge   (\mathcal{A}(\rho ) \neq C)  \land (h(\sigma ,\rho)\leq L).
	\end{align}
\end{definition} 
If $\sigma, \rho$ are classical states, suitable distance metrics are the $l_p$-norms, so $h(\sigma,\rho)=||\sigma, \rho||_p$. If $\sigma, \rho$ are quantum states, we will use the trace distance $h(\sigma,\rho)=\tau(\sigma,\rho)=\text{Tr}(|\rho-\sigma|)/2$. 

In the rest of this paper, we will use Greek letters to refer to quantum states and bold Roman letters to refer to classical states unless otherwise specified.
\subsection{Differential privacy}
Differential privacy is an important concept in computer science that quantifies the sensitivity of the outputs of algorithms to changes in their input data. The less sensitive it is, the better the algorithm can preserve the privacy of the input data. Here we can formulate the definition of classical differential privacy as follows \cite{dwork2011differential}.
\theoremstyle{definition}
\begin{definition}[Classical differential privacy]
Suppose $\mathcal{M}$ is a classical algorithm that takes as input entries $\vec x \in X$ of some classical database $X$ and outputs values belonging to the set $\mathcal{S}$. Then $\mathcal{M}$ is said to satisfy \textit{classical $(\epsilon,\delta)$-differential privacy} if, for all $\vec x \in X$, $\vec x' \in X'$ which are separated by a small distance, e.g., Hamming distance $h(\vec x, \vec x')\leq 1$ and all measurable sets $\mathcal{S}\subseteq \text{Range}(\mathcal{M})$, 
\begin{align} \label{eq:cldp}
Pr(\mathcal{M}(\vec x) \in \mathcal{S}) \leq e^{\epsilon} Pr(\mathcal{M}(\vec x') \in \mathcal{S})+\delta,
\end{align}
where $Pr(\cdot)$ denotes the probability of $(\cdot)$ and $\epsilon, \delta>0$. We call $(\epsilon, \delta)$ the \textit{privacy budget} for the algorithm. 
\end{definition}
Informally, this definition says that for two input data points separated by a small distance, a small privacy budget means that the output of the algorithm differs very little, hence the input information is partially kept private. The selection of this distance $h(\cdot, \cdot)$ varies depending on the task, e.g., Hamming distance or $l_p$ distance \cite{dwork2011differential}. A natural distance $h(\cdot, \cdot)$  for quantum data is the trace distance, which we can employ in a definition for quantum differential privacy \cite{zhou2017differential} which we will use throughout this paper. An alternative definition for quantum differential privacy \cite{aaronson2019gentle} does not require quantum data $\sigma$ and $\tau$ to be close in trace distance, but rather that $\rho$ is obtainable by applying a quantum operation on only a single register of $\sigma$. See also \cite{arunachalam2020quantum} for a related definition applied to PAC learning. However, for our purposes of working directly with quantum states $\sigma$ and $\rho$, the use of trace distance is the most appropriate. 

Suppose $\mathcal{M}(\sigma, \Pi_{\mathcal{S}})$ is a quantum algorithm that takes input state $\sigma$, applies a quantum operation $\mathcal{E}$ before applying the POVM $\{\Pi_{k}\}$, where the set of final measurement results $k \in \mathcal{S}$. These set of outcomes are then observed with probability $\Pr(\mathcal{M}(\sigma,\Pi_{\mathcal{S}}) \in  \mathcal{S})= \sum_{k\in \mathcal{S}}\Tr(\Pi_k\mathcal{E}(\mathcal{\sigma}))$. By analogy with Definition \ref{def:K_multi}, we can write a definition of quantum differential privacy following Zhou and Ying \cite{zhou2017differential}.
 \theoremstyle{definition}
\begin{definition}[Quantum differential privacy]\label{def:QDP}
The quantum algorithm $\mathcal{M}$ satisfies \textit{$(\epsilon,\delta)$-quantum differential privacy} if for all input quantum states $\sigma$ and $\rho$ with $\tau(\sigma, \rho)<\tau_D$ and for all measurable sets $\mathcal{S}\subseteq \text{Range}(\mathcal{M})$ (equivalently, for every $\Pi_S \subseteq  \{\Pi_k\}_{k=0}^{\max k=K-1}$) 
\begin{align}
\Pr(\mathcal{M}(\rho, \Pi_{\mathcal{S}}) \in\mathcal{S}) \leq e^{\epsilon} \Pr(\mathcal{M}(\sigma, \Pi_{\mathcal{S}}) \in \mathcal{S})+\delta.
\end{align}
\end{definition}
For the rest of the paper, we focus on the case $\delta=0$, which is referred to as \textit{$\epsilon$-quantum differential privacy}. To illustrate a simple example, suppose we have a binary classification problem where we choose the POVM $\{\Pi_0, \Pi_1=\mathbf{1}-\Pi_0\}$. The probability $\sigma$ is assigned class labels $k=0,1$ by a quantum binary classifier is $\tilde{\vec y}_0(\sigma) \equiv \text{Tr}(\Pi_0(\mathcal{E}(\sigma))$ and $\tilde{\vec y}_1(\sigma)=1-\tilde{\vec y}_0(\sigma)$ respectively. Then if $\mathcal{M}$ satisfies $\epsilon$-quantum differential privacy, Definition \ref{def:QDP} requires that we must satisfy
\begin{align} \label{eq:def4example}
e^{-\epsilon}\leq \frac{\tilde{\vec y}_k(\rho)}{\tilde{\vec y}_k(\sigma)}\leq e^{\epsilon}.
\end{align}
\section{Improving robustness of quantum classifiers against adversaries by adding noise}
In this section, we show how the presence of depolarisation noise in quantum circuits for classification improves robustness against adversarial examples. We begin with our definition of adversarial robustness. 
\begin{definition}[Adversarial robustness]
Let the test state $\sigma$ have the class label $\mathcal{A}(\sigma)$ under a classification algorithm $\mathcal{A}$. Then $\mathcal{A}$ is said to possess \textit{adversarial robustness of size $\tau_D$} if for all $\sigma$ that is perturbed $\sigma \rightarrow \rho$ by an unknown source where $\tau(\sigma,\rho)\leq \tau_D$, the class label of $\rho$ does not change, i.e., $\mathcal{A}(\rho)=\mathcal{A}(\sigma)$.  
\end{definition}
We must emphasise here the difference between robustness bounds against a known noise source versus an unknown adversary. Protection against an unknown adversary is a robustness guarantee against a worst-case scenario, whereas commonly-appearing known noise sources are usually far from the worst-case scenario. 

Our goal is to demonstrate how a naturally-occurring known noise source can be used to protect a quantum classifier against worst-case adversarial perturbations. This can be done in three main steps. We first show the robustness of quantum classifiers to this known noise source, then demonstrate how this gives rise to quantum differential privacy for the classifier. Finally we prove how quantum differential privacy can be used to derive a theoretical bound against general adversaries. 

One such naturally-occurring quantum noise source is the depolarisation noise channel $\mathcal{N}_{p}$, which acts on a $D$-dimensional state $\Sigma_D$ like 
\begin{align}\label{eq:noisedep}
\mathcal{N}_{p}(\Sigma_D)=p \frac{\mathbb{I}_{D}}{D}+(1-p)\Sigma_D,
\end{align}
where $\mathbb{I}_{D}$ is the $D \times D$ identity matrix and $p\in [0,1]$. Before the final measurement, we can represent our quantum classifier as a unitary $U$ gate acting on an input state $\sigma\otimes |a\rangle \langle a|$, as represented in Fig.~\ref{fig:Fig1}(a). We can then add $\mathcal{N}_{p_i}$ after each unitary $U_i$ where $U=U_1...U_l$ and $i=1,...,l$. Here $l$ is the total number of depolarisation channels with noise parameters $p_i>0$. This noisy circuit is depicted in Fig.~\ref{fig:Fig1}(b). The output of this noisy $K$-multiclass classification circuit given test state $\sigma$ can be written as 
\begin{align} \label{eq:tildeyk1}
\tilde{\vec y}_k (\sigma)\equiv \text{Tr}(\Pi_k \mathcal{N}_{p_l}(U_l(...\mathcal{N}_{p_1}(U_1(\sigma \otimes |a\rangle \langle a|)U_1^{\dagger})...)U_l^{\dagger})),
\end{align}
where it can be shown \footnote{This is an extension from Theorem 2 in \cite{larose2020robust} to beyond $K=2$ and follows by an inductive application of Eq.~\eqref{eq:tildeyk1}.} that for $p \equiv 1-\prod_{i=1}^l (1-p_i)$
\begin{align} \label{eq:ytildep}
\tilde{\vec y}_k (\sigma)=\frac{p}{K}+(1-p)\vec y_k(\sigma). 
\end{align}
This leads to the interesting observation that the noisy test score $\tilde{\vec y}_k (\sigma)$ is independent of where depolarisation channels are placed in the circuit. Furthermore, the effect of all depolarisation channels with parameters $p_i$ can be replaced by a single depolarisation channel with parameter $p \equiv 1-\prod_{i=1}^l (1-p_i)$. In the trivial case $p_i=0$ for all $i$, $p=0$. For the rest of this paper, we will for simplicity replace the effect of all noise parameters $p_i$ with $p$ unless stated otherwise. 

Before achieveing our goal, we first need Eq.~\eqref{eq:ytildep} to prove the following lemma showing that the $K$-multiclass classification algorithm performed by the noisy circuit is robust against depolarisation noise for any $0\leq p_i<1$. This is a generalisation of a recent result from LaRose and Coyle \footnote{This appears in Theorem 2 in \cite{larose2020robust}. Also see \cite{larose2020robust} for a list of common types of noise that binary quantum classifers are naturally robust against as well as interesting encoding strategies to induce robustness when the classifiers are not naturally robust.} to the case of $K$-multiclass classification.
\begin{lemma}\label{lem:K-multi-classification}
Let $\vec y_k(\sigma)$ denote the output for the noiseless circuit in Fig.~\ref{fig:Fig1}(a), i.e., $p_i=0$ for all $i$. Then if the class label $C$ is assigned to $\sigma$ by the noiseless circuit, i.e., $C=\arg \max_k \vec y_k(\sigma)$, then the same label is also assigned by the noisy circuit, which has $p_i>0$ for at least one $i$. This means $\arg \max_k \tilde{\vec y}_k(\sigma)=C$ for any $\sigma$ and $0\leq p_i<1$. Furthermore, if $\arg \max_k \tilde{\vec y}_k(\sigma)=C$ then $C=\arg \max_k \vec y_k(\sigma)$. 
\end{lemma}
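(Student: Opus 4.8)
The key tool is Eq.~\eqref{eq:ytildep}, which expresses each noisy score as the affine map $\tilde{\vec y}_k(\sigma)=p/K+(1-p)\vec y_k(\sigma)$ of the noiseless score. The plan is to exploit the fact that this affine map is applied with the \emph{same} constants $p/K$ and $(1-p)$ to every component $k$, and that the slope $(1-p)$ is strictly positive whenever $p<1$. Since an order-preserving transformation is applied uniformly across all $K$ coordinates, the index achieving the maximum cannot change.

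Concretely, first I would fix an arbitrary test state $\sigma$ and assume $0\le p_i<1$ for all $i$, so that $p=1-\prod_{i=1}^l(1-p_i)$ satisfies $0\le p<1$ and hence $1-p>0$. Then for any two labels $j,k\in\{0,\dots,K-1\}$ I would take the difference of the noisy scores using Eq.~\eqref{eq:ytildep}:
\begin{align}\label{eq:mono_diff}
\tilde{\vec y}_j(\sigma)-\tilde{\vec y}_k(\sigma)=(1-p)\bigl(\vec y_j(\sigma)-\vec y_k(\sigma)\bigr).
\end{align}
Because $1-p>0$, the sign of $\tilde{\vec y}_j(\sigma)-\tilde{\vec y}_k(\sigma)$ equals the sign of $\vec y_j(\sigma)-\vec y_k(\sigma)$, and in particular $\tilde{\vec y}_j(\sigma)\ge\tilde{\vec y}_k(\sigma)$ if and only if $\vec y_j(\sigma)\ge\vec y_k(\sigma)$. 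This pairwise sign-preservation is the entire content needed: it shows the two score vectors induce the identical ordering on the labels.

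From here both directions of the lemma follow immediately. If $C=\arg\max_k\vec y_k(\sigma)$, then $\vec y_C(\sigma)\ge\vec y_k(\sigma)$ for every $k$, so by \eqref{eq:mono_diff} $\tilde{\vec y}_C(\sigma)\ge\tilde{\vec y}_k(\sigma)$ for every $k$, giving $\arg\max_k\tilde{\vec y}_k(\sigma)=C$. Conversely, if $\arg\max_k\tilde{\vec y}_k(\sigma)=C$, the same inequality read backwards (again using $1-p>0$) yields $\vec y_C(\sigma)\ge\vec y_k(\sigma)$ for all $k$, so $C=\arg\max_k\vec y_k(\sigma)$. This establishes the claimed equivalence $\arg\max_k\tilde{\vec y}_k(\sigma)=C \iff C=\arg\max_k\vec y_k(\sigma)$ for any $\sigma$ and any $0\le p_i<1$.

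The argument is genuinely short, so there is no substantial obstacle in the main line of reasoning once Eq.~\eqref{eq:ytildep} is granted. The only point requiring care is the boundary behaviour in $p$: the equivalence relies essentially on the strict inequality $p<1$ (equivalently $p_i<1$ for all $i$), since at $p=1$ the slope vanishes and $\tilde{\vec y}_k(\sigma)=1/K$ for every $k$, destroying the ordering information. I would therefore state explicitly where $1-p>0$ is used. A secondary subtlety worth a remark is the tie-breaking convention in $\arg\max$: since \eqref{eq:mono_diff} preserves equalities as well as strict inequalities, the set of maximizers is identical for the two vectors, so any fixed tie-breaking rule applied to $\arg\max$ returns the same label $C$ in both cases, and the statement holds regardless of that convention.
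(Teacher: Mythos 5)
Your proof is correct and takes essentially the same route as the paper's own proof: both rest on the affine relation $\tilde{\vec y}_k(\sigma)=p/K+(1-p)\vec y_k(\sigma)$ and the observation that applying a common affine map with positive slope $1-p$ to every score preserves their ordering, hence the $\arg\max$. Your pairwise-difference formulation and the explicit remarks on the $p=1$ boundary and tie-breaking are minor presentational refinements of the paper's chain of inequalities, not a different argument.
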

\begin{proof}[Proof of Lemma \ref{lem:K-multi-classification}]
For details please see Appendix~\ref{sec:lemma1}.
\end{proof} 
The above result demonstrates robustness of quantum classifiers against depolarisation noise if one has access to the exact probabilities $\tilde{\vec y}_k(\sigma)$. However, this is only possible in the limit of infinite sampling. If one is only able to sample the circuit $N$ times, one instead obtains only the estimated values $\tilde{\vec y}_k^{(N)}(\sigma)$. Then to guarantee robustness against depolarisation noise to high probability, we find the following required sampling complexity $N$ increases only with increasing depolarisation noise $p$, but is not dependent on the dimensionality of $\sigma$. 
\begin{prop}\label{prop:prop1}
	Let the predicted classification label of $\sigma$ using the noiseless $K$-multiclass classification circuit be $C$. This means we can define $\xi \equiv \vec y_C(\sigma)-\max_{k \neq C}\vec y_k(\sigma)$ where $\xi>0$. In the corresponding circuit with depolarisation noise parameters $p_1,...,p_l$, one samples the circuit $N$ times for each $k$ to obtain the estimates $\tilde{\vec{y}}_k^{(N)}(\sigma)$. Then $\sigma$ is also labelled $C$ with probability at least $\beta$ if the sample complexity $N\sim 1/(8\xi^2(1-p)^2)\ln(2/(1-\beta))$, where $p \equiv 1-\prod_{i=1}^l (1-p_i)$. 
\end{prop}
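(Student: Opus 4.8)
The plan is to reduce the claim about the finitely-sampled classifier to a concentration statement about sample means whose expectations are exactly the noisy scores of Eq.~\eqref{eq:ytildep}. First I would use Eq.~\eqref{eq:ytildep} to compute the \emph{margin} of the noisy classifier in the infinite-sampling limit. Since $\tilde{\vec y}_k(\sigma)=p/K+(1-p)\vec y_k(\sigma)$ is an affine, strictly increasing function of $\vec y_k(\sigma)$ with slope $(1-p)>0$, for every competitor $k\neq C$ one has
\begin{align}
\tilde{\vec y}_C(\sigma)-\tilde{\vec y}_k(\sigma)=(1-p)\big(\vec y_C(\sigma)-\vec y_k(\sigma)\big)\geq (1-p)\,\xi.
\end{align}
In particular Lemma~\ref{lem:K-multi-classification} guarantees $\arg\max_k\tilde{\vec y}_k(\sigma)=C$, so at finite $N$ the only possible source of error is statistical fluctuation of the estimates $\tilde{\vec y}_k^{(N)}(\sigma)$ around their means $\tilde{\vec y}_k(\sigma)$.

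Next I would model the sampling. Estimating $\tilde{\vec y}_k(\sigma)$ from $N$ shots makes $\tilde{\vec y}_k^{(N)}(\sigma)=\frac{1}{N}\sum_{i=1}^N B_i^{(k)}$ an average of $N$ i.i.d.\ indicator variables $B_i^{(k)}\in\{0,1\}$ with mean $\tilde{\vec y}_k(\sigma)$. The empirical label is $C$ precisely when $\tilde{\vec y}_C^{(N)}(\sigma)>\tilde{\vec y}_k^{(N)}(\sigma)$ for all $k\neq C$. A convenient sufficient condition is that no estimate strays from its mean by as much as half the margin: if $|\tilde{\vec y}_k^{(N)}(\sigma)-\tilde{\vec y}_k(\sigma)|<(1-p)\xi/2$ for every $k$, then $\tilde{\vec y}_C^{(N)}(\sigma)-\tilde{\vec y}_k^{(N)}(\sigma)>(1-p)\xi-2\cdot(1-p)\xi/2=0$, so the argmax is $C$. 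I would then apply Hoeffding's inequality, which for each bounded sample mean gives $\Pr[|\tilde{\vec y}_k^{(N)}-\tilde{\vec y}_k|\geq t]\leq 2e^{-2Nt^2}$, at tolerance $t=(1-p)\xi/2$; the exponent then scales as $N(1-p)^2\xi^2$, which is the origin of both the $1/\xi^2$ and the $1/(1-p)^2$ dependence of the final sample complexity, and it makes transparent that larger depolarisation $p$ shrinks the margin and hence demands more shots.

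Finally I would collect the failure probability and invert the bound. Assembling the per-estimate failure probabilities and requiring the total be at most $1-\beta$ yields an inequality of the form $\mathrm{const}\cdot e^{-cN(1-p)^2\xi^2}\leq 1-\beta$, which on solving for $N$ gives $N\gtrsim \frac{1}{\xi^2(1-p)^2}\ln\!\big(2/(1-\beta)\big)$ up to the numerical prefactor quoted; that prefactor follows from the chosen two-sided bound and the half-margin tolerance. The step I expect to be the main obstacle is controlling the \emph{maximum} over the $K-1$ competing estimates cleanly, since naively union-bounding introduces a $K$-dependence. To obtain the stated bound, which is independent of the dimension of $\sigma$ and of $K$, I would instead track only the single worst competitor $k^*=\arg\max_{k\neq C}\vec y_k(\sigma)$ and work with the scalar difference $\tilde{\vec y}_C^{(N)}(\sigma)-\tilde{\vec y}_{k^*}^{(N)}(\sigma)$, whose per-shot increments lie in $[-1,1]$ and whose mean is at least $(1-p)\xi$; applying Hoeffding directly to this single variable eliminates the $K$ factor while preserving the $(1-p)^2\xi^2$ scaling that encodes how added depolarisation noise raises the required sampling.
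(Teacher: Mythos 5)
Your core argument is the same as the paper's own proof (Appendix B): use Eq.~\eqref{eq:ytildep} to get the noisy margin $\tilde{\vec y}_C(\sigma)-\max_{k \neq C}\tilde{\vec y}_k(\sigma)=(1-p)\xi$, apply Hoeffding's inequality at a tolerance proportional to that margin, and invert to obtain $N\sim \xi^{-2}(1-p)^{-2}\ln(2/(1-\beta))$. If anything, your write-up is more careful than the paper's: the appendix contains a slip (it asserts $\eta=p\xi$ where the correct margin is $(1-p)\xi$, though the final complexity correctly carries $(1-p)^2$), and it applies Hoeffding with tolerance $\zeta=2\eta$, a choice that reproduces the quoted prefactor $1/8$ but is not actually a sufficient precision to certify that the empirical argmax is preserved; your half-margin tolerance $t=(1-p)\xi/2$ is the logically sound condition, at the cost of a different numerical constant — which is immaterial given the $\sim$ in the statement.

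The one genuine flaw is your final manoeuvre for avoiding the union bound. Controlling only the scalar difference $\tilde{\vec y}_C^{(N)}(\sigma)-\tilde{\vec y}_{k^*}^{(N)}(\sigma)$ for the single worst \emph{expected} competitor $k^*=\arg\max_{k\neq C}\vec y_k(\sigma)$ does not imply $\arg\max_k \tilde{\vec y}_k^{(N)}(\sigma)=C$ once $K\geq 3$: the empirical winner can be some other $k\neq k^*$ whose estimate fluctuated upward, and your single-pair Hoeffding bound places no constraint on those estimates. Since every competitor $k\neq C$ satisfies $\vec y_C(\sigma)-\vec y_k(\sigma)\geq \xi$, hence has noisy margin at least $(1-p)\xi$, the correct repair is a union bound over the $K-1$ pairwise comparisons, which merely replaces $\ln\bigl(2/(1-\beta)\bigr)$ by $\ln\bigl(2(K-1)/(1-\beta)\bigr)$ — harmless for fixed $K$ under the proposition's asymptotic statement, and, it should be said, a factor the paper's own per-$k$ application of Hoeffding also silently drops.
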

\begin{proof}[Proof of Proposition \ref{prop:prop1}]
	A basic sketch of the proof is the following. It can be shown that $\eta \equiv \tilde{\vec {y}}_C(\sigma)-\max_{k\neq C}\tilde{\vec{y}}_k(\sigma)=p\xi$. Thus one requires sufficient $N$ to resolve the difference $\tilde{\vec{y}}_C^{(N)}(\sigma)-\tilde{\vec{y}}_k^{(N)}(\sigma)$ to within $2\eta$. We then employ Hoeffding's inequality \cite{mohri2018foundations} to bound the sample complexity. Please see Appendix~\ref{sec:proposition1} for details. 
\end{proof} 
 
Now we show how adding depolarisation noise gives rise to quantum differential privacy for our algorithm. This is an application of a result from Zhou and Ying~\cite{zhou2017differential} for our quantum classifier. 

\begin{lemma}\label{lem:DP-robust}
	Let the algorithm $\mathcal{M}$ correspond to the $K$-multiclass classification circuit defined in Fig.~\ref{fig:Fig1}(b) with depolarisation noise channels $\mathcal{N}_{p_i}$, where $i=1,...,l$ and $p \equiv 1-\prod_{i=1}^l (1-p_i)$, and measurement operators $\{\Pi_k\}_{k=1}^K$. Then for two quantum test states $\sigma$ and $\rho$ obeying $\tau(\sigma,\rho)\leq \tau_D$ with $0\leq \tau_D\leq 1$, $\mathcal{M}$ satisfies $\epsilon$-quantum differential privacy where
\begin{align}\label{eq:dpdefdep}
\epsilon=\ln\left(1+D_{meas}\frac{(1-p)\tau_D}{p}\right) 
\end{align}
and $D_{meas}\geq K$ is the dimension of the operators $\{\Pi_k\}_{k=1}^K$.
\end{lemma}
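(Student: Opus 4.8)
The plan is to reduce the whole statement to the effective single-depolarisation picture behind Eq.~\eqref{eq:ytildep} and then combine two standard properties of the trace distance: its operational (variational) characterisation and its monotonicity under quantum channels. Write $P_\sigma(\mathcal{S}) \equiv \Pr(\mathcal{M}(\sigma,\Pi_{\mathcal{S}})\in\mathcal{S}) = \Tr(\Pi_{\mathcal{S}}\,\tilde{\mathcal{E}}(\sigma))$ with $\Pi_{\mathcal{S}}=\sum_{k\in\mathcal{S}}\Pi_k$ and $\tilde{\mathcal{E}}$ the full noisy channel, and likewise $P_\rho(\mathcal{S})$. The goal is to show $P_\rho(\mathcal{S})\leq e^{\epsilon}P_\sigma(\mathcal{S})$ for every non-empty $\mathcal{S}$, with $\epsilon$ as in Eq.~\eqref{eq:dpdefdep}; since we work at $\delta=0$ there is no additive slack to handle.

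First I would use the fact that the collection of depolarisation channels reduces to a single effective channel of strength $p$ (as already established for Eq.~\eqref{eq:ytildep}). Its action at the output splits each measured probability into a state-independent ``noise floor'' and a rescaled noiseless signal,
\begin{align}
P_\sigma(\mathcal{S}) = p\,\frac{\Tr(\Pi_{\mathcal{S}})}{D_{meas}} + (1-p)\,\Tr(\Pi_{\mathcal{S}}\,\rho^{\sigma}_{\mathrm{out}}),
\end{align}
where $\rho^{\sigma}_{\mathrm{out}}=\mathcal{E}(\sigma\otimes|a\rangle\langle a|)$ is the noiseless output and the floor is maximally mixed on the $D_{meas}$-dimensional measurement space; the same identity holds for $\rho$. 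Subtracting, the gap between numerator and denominator is $P_\rho(\mathcal{S})-P_\sigma(\mathcal{S}) = (1-p)\,\Tr\!\big(\Pi_{\mathcal{S}}(\rho^{\rho}_{\mathrm{out}}-\rho^{\sigma}_{\mathrm{out}})\big)$. Because $0\leq\Pi_{\mathcal{S}}\leq\mathbb{I}$, the variational form of the trace distance bounds this by $\tau(\rho^{\rho}_{\mathrm{out}},\rho^{\sigma}_{\mathrm{out}})$, and monotonicity under the common channel $\mathcal{E}$ (the ancilla being identical on both sides) gives $\tau(\rho^{\rho}_{\mathrm{out}},\rho^{\sigma}_{\mathrm{out}})\leq\tau(\sigma,\rho)\leq\tau_D$. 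Hence $P_\rho(\mathcal{S})\leq P_\sigma(\mathcal{S})+(1-p)\tau_D$.

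Next I would lower bound the denominator by discarding the non-negative signal term, $P_\sigma(\mathcal{S})\geq p\,\Tr(\Pi_{\mathcal{S}})/D_{meas}\geq p/D_{meas}$, where the final step uses $\Tr(\Pi_{\mathcal{S}})\geq1$ for every non-empty $\mathcal{S}$ (each of the $K\leq D_{meas}$ measurement projectors carries at least one dimension). Combining the two estimates,
\begin{align}
\frac{P_\rho(\mathcal{S})}{P_\sigma(\mathcal{S})} \leq 1+\frac{(1-p)\tau_D}{P_\sigma(\mathcal{S})} \leq 1+D_{meas}\,\frac{(1-p)\tau_D}{p},
\end{align}
which is precisely $e^{\epsilon}$ for the $\epsilon$ of Eq.~\eqref{eq:dpdefdep}; taking logarithms completes the argument. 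This is effectively a specialisation of the Zhou--Ying bound to the depolarised classifier.

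The step I expect to be the main obstacle is pinning down the noise-floor term correctly, namely justifying that the maximally mixed contribution to $P_\sigma(\mathcal{S})$ equals $p\,\Tr(\Pi_{\mathcal{S}})/D_{meas}$ and that $\Tr(\Pi_{\mathcal{S}})\geq1$. This is exactly where the hypotheses $D_{meas}\geq K$ and the projective structure of $\{\Pi_k\}$ are used, and where a careless treatment of the gap between the full circuit dimension $D$ and the measurement dimension $D_{meas}$ would produce the wrong constant in $\epsilon$. By contrast, the two trace-distance facts in the second step, i.e. the Helstrom variational identity and contractivity under CPTP maps, are standard and should go through without difficulty.
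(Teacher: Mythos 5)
Your proposal is correct and follows essentially the same route as the paper's Appendix C (itself an adaptation of Zhou--Ying): decompose each noisy output probability into the maximally mixed noise floor $p\,\Tr(\Pi_{\mathcal{S}})/D_{meas}$ plus the $(1-p)$-scaled noiseless term, bound the numerator gap by $(1-p)\tau_D$ via the variational characterisation of trace distance together with its invariance under the unitary circuit, and lower-bound the denominator by the noise floor using $\Tr(\Pi_{\mathcal{S}})\geq 1$. The only cosmetic difference is that you verify the bound for arbitrary outcome sets $\mathcal{S}$ at once while the paper checks the per-outcome ratio $\tilde{\vec y}_k(\rho)/\tilde{\vec y}_k(\sigma)$ (equivalent at $\delta=0$ by summing over $k\in\mathcal{S}$); if anything, your careful separation of the two ingredients $\Pi_{\mathcal{S}}\leq\mathbb{I}$ for the numerator and $\Tr(\Pi_k)\geq 1$ for the denominator is the correct reading of the paper's typo-marred intermediate display, which as literally written would produce a constant $D$ rather than the stated $D_{meas}$.
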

\begin{proof}[Proof of Lemma \ref{lem:DP-robust}]
	 This is equivalent to Theorem 3  from \cite{zhou2017differential} applied to our quantum classifier, but we extend to the case where we can apply multiple depolarisation channels $\mathcal{N}_{p_i}$. For details please see Appendix~\ref{sec:lemma2proof}.  
\end{proof}

Lemma \ref{lem:DP-robust} states that the privacy budget $\epsilon$ in the presence of depolarisation noise decreases with increasing $p \equiv 1-\prod_{i=1}^l (1-p_i)$, hence higher depolarisation noise parameters gives greater differential privacy. Furthermore, this privacy is independent of where one inserts depolarisation noise because the product $\prod_{i=1}^l (1-p_i)$ is invariant under permutation of its factors. It is also independent of any details of the classifier except $D_{meas}$, which serves as an upper-bound to the number of class labels in our classifier. We will return to these points later. 

Using the results of Lemmas \ref{lem:K-multi-classification} and \ref{lem:DP-robust}, the following theorem demonstrates that by increasing the strength of depolarisation noise in our circuit, this also increases our $K$-multiclass classifier's robustness against adversarial examples. 

\begin{theorem}[Infinite sampling case] \label{thm:DP_infini}
	We begin with our $K$-multiclass classification circuit with depolarisation noise parameters $p_i$ where $i=1,...,l$ and $p \equiv 1-\prod_{i=1}^l (1-p_i)$. Let infinite sampling of the output be allowed, so we can find $\tilde{\vec y}_k(\rho)$ for $k=0,...,K-1$ for any test state $\rho$ given.  Suppose $\tilde{\vec{y}}_C(\sigma)>e^{2\epsilon}\max_{k \neq C}\tilde{\vec{y}}_k (\sigma)$ holds, where $\epsilon=\ln(1+D_{meas}(1-p)\tau_D/p)$, which implies that $\sigma$ is assigned the class label $C$, i.e., $C=\arg \max_k \tilde{\vec{y}}_k(\sigma)=\arg \max_k \vec y_k(\sigma)$. Then $\rho$ is also labelled as $C$, i.e., $C=\arg \max_k \tilde{\vec y}_k (\rho)=\arg \max_k \vec y_k (\rho)$ for any $\rho$ where $\tau(\sigma,\rho)\leq \tau_D$.
\end{theorem}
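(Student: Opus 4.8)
\textbf{Proof proposal for Theorem \ref{thm:DP_infini}.}

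The plan is to chain together the two key facts we have already established: the quantum differential privacy guarantee of Lemma \ref{lem:DP-robust}, and the noise-robustness equivalence $\arg\max_k \tilde{\vec y}_k = \arg\max_k \vec y_k$ of Lemma \ref{lem:K-multi-classification}. The hypothesis $\tilde{\vec y}_C(\sigma) > e^{2\epsilon}\max_{k\neq C}\tilde{\vec y}_k(\sigma)$ is a ``margin'' condition in the noisy probabilities; the goal is to show this margin is large enough that it survives the transition $\sigma \rightarrow \rho$ even in the worst case, where the two factors of $e^\epsilon$ are each consumed by one application of the privacy bound.

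First I would apply Lemma \ref{lem:DP-robust} to the algorithm $\mathcal{M}$ whose output probabilities are exactly the $\tilde{\vec y}_k$. Since $\tau(\sigma,\rho)\leq \tau_D$, the lemma gives $\epsilon$-quantum differential privacy with the stated $\epsilon = \ln(1+D_{meas}(1-p)\tau_D/p)$. Specialising Definition \ref{def:QDP} to the singleton measurable set $\mathcal{S}=\{k\}$ (i.e.\ $\Pi_{\mathcal{S}}=\Pi_k$) yields, in the $\delta=0$ case and for every label $k$, the two-sided bound
\begin{align}\label{eq:twosided}
e^{-\epsilon}\tilde{\vec y}_k(\sigma)\leq \tilde{\vec y}_k(\rho)\leq e^{\epsilon}\tilde{\vec y}_k(\sigma),
\end{align}
exactly as in the binary illustration Eq.~\eqref{eq:def4example}. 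The next step is to combine the two sides of \eqref{eq:twosided} in the right directions: I would use the lower bound on the correct label $C$ and the upper bound on every competing label. For the true class, $\tilde{\vec y}_C(\rho)\geq e^{-\epsilon}\tilde{\vec y}_C(\sigma)$; for any $k\neq C$, $\tilde{\vec y}_k(\rho)\leq e^{\epsilon}\tilde{\vec y}_k(\sigma)\leq e^{\epsilon}\max_{k'\neq C}\tilde{\vec y}_{k'}(\sigma)$.

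Now I would feed in the hypothesis. Chaining these gives
\begin{align}\label{eq:chain}
\tilde{\vec y}_C(\rho)\geq e^{-\epsilon}\tilde{\vec y}_C(\sigma) > e^{-\epsilon}e^{2\epsilon}\max_{k\neq C}\tilde{\vec y}_k(\sigma) = e^{\epsilon}\max_{k\neq C}\tilde{\vec y}_k(\sigma)\geq \tilde{\vec y}_k(\rho)
\end{align}
for every $k\neq C$. Hence $\tilde{\vec y}_C(\rho)>\tilde{\vec y}_k(\rho)$ for all $k\neq C$, so $C=\arg\max_k \tilde{\vec y}_k(\rho)$; this is precisely the claim that $\rho$ is labelled $C$ by the noisy circuit. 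The factor $e^{2\epsilon}$ in the hypothesis is exactly engineered so that one factor of $e^{\epsilon}$ is absorbed protecting $C$ from below and the other absorbed inflating the competitors from above, which explains the ``$2\epsilon$'' rather than a single $\epsilon$.

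Finally, to obtain the full conclusion of the theorem I would invoke the second half of Lemma \ref{lem:K-multi-classification}: since $\arg\max_k \tilde{\vec y}_k(\rho)=C$ and the noisy relabelling $\tilde{\vec y}_k=\tfrac{p}{K}+(1-p)\vec y_k$ of Eq.~\eqref{eq:ytildep} is an increasing affine map that preserves the arg-max, it follows that $C=\arg\max_k \vec y_k(\rho)$ as well, and likewise $C=\arg\max_k\vec y_k(\sigma)$ from the hypothesis on $\sigma$. I do not anticipate a genuinely hard analytic obstacle here; the main subtlety is bookkeeping, namely making sure the privacy inequality \eqref{eq:twosided} is applied with $\sigma$ and $\rho$ in the correct roles (Lemma \ref{lem:DP-robust} as stated bounds $\Pr(\mathcal{M}(\rho,\cdot))$ by $\Pr(\mathcal{M}(\sigma,\cdot))$, and $\tau$ is symmetric so the reverse bound holds too), and confirming that applying the single-outcome specialisation to $\max_{k\neq C}$ is legitimate because the maximising label is a fixed index for which \eqref{eq:twosided} holds.
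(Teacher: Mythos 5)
Your proposal is correct and follows essentially the same route as the paper's Appendix C proof: invoke Lemma \ref{lem:DP-robust} to get the two-sided bound $e^{-\epsilon}\leq \tilde{\vec y}_k(\rho)/\tilde{\vec y}_k(\sigma)\leq e^{\epsilon}$, chain the lower bound on label $C$ with the upper bound on the competitors to absorb the $e^{2\epsilon}$ margin, and finish with the arg-max equivalence of Lemma \ref{lem:K-multi-classification}. If anything, your bookkeeping (explicitly using the upper bound on each $k\neq C$ and noting the symmetry of $\tau$) is cleaner than the paper's write-up, which contains some index slips between $\tilde{\vec y}_k$ and $\tilde{\vec y}_C$.
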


\begin{proof}[Proof of Theorem \ref{thm:DP_infini}]
	Please refer to Appendix~\ref{sec:appendixc} for the proof.
\end{proof}
 
This means that if a test state $\sigma$ undergoes an arbitrary adversarial perturbation $\sigma \rightarrow \rho$, the classification of $\rho$ will remain identical to that of $\sigma$ for a larger range of $\tau(\sigma,\rho)$ if $p$ increases. Furthermore, if $\tau_D$ remains constant, then the extra condition required of the input state $\tilde{\vec y}_C(\sigma)>e^{2\epsilon}\max_{j \neq C}\tilde{\vec y}_j (\sigma)$ also becomes easier to satisfy as $p$ increases. A similar result holds for the finite sampling case. 
\begin{theorem}[Finite sampling case]\label{thm:DP_fini}
	Suppose one samples the output of the circuit $N$ times for the estimation of each $\tilde{\vec{y}}_k(\sigma)$. Let $\tilde{\vec{y}}_C^{(N)}(\sigma)-\zeta>e^{2\epsilon}\max_{k \neq C}(\tilde{\vec{y}}_k^{(N)}(\sigma)+\zeta)$ where $\epsilon=\ln(1+D_{meas}(1-p)\tau_D/p)$, which implies $\sigma$ has the class label $C$. Then the class label of $\rho$ is also $C$, i.e., $C=\arg \max_k \vec{y}_k(\rho)=\arg \max_k \tilde{\vec{y}}_k(\rho)$ to probability at least $1-2\exp(-2N\zeta^2)$ for any $\rho$ where $\tau(\sigma,\rho)\leq \tau_D$. This also implies $\tilde{\vec{y}}_C^{(N)}(\rho)+\zeta>\max_{k\neq C}\tilde{\vec{y}}_k^{(N)}(\rho)-\zeta$ to probability at least $1-2\exp(-2N\zeta^2)$.
\end{theorem}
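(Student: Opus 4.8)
The plan is to reduce the finite-sampling statement to the already-established infinite-sampling Theorem~\ref{thm:DP_infini} by interposing a concentration argument that controls the gap between each empirical estimate $\tilde{\vec{y}}_k^{(N)}$ and its true value $\tilde{\vec{y}}_k$. The $\pm\zeta$ buffers built into the hypothesis are precisely the slack needed to absorb the statistical fluctuations, so the entire proof is organised around Hoeffding's inequality~\cite{mohri2018foundations}: the hypothesis is stated in terms of the (accessible) sampled quantities, whereas the differential-privacy machinery of Lemma~\ref{lem:DP-robust} and Theorem~\ref{thm:DP_infini} speaks about the (inaccessible) true probabilities, and Hoeffding is the bridge.

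First I would observe that, for fixed $\sigma$ and fixed $k$, each of the $N$ measurements contributes an independent indicator of the outcome being $k$, a Bernoulli variable of mean $\tilde{\vec{y}}_k(\sigma)$ bounded in $[0,1]$, so $\tilde{\vec{y}}_k^{(N)}(\sigma)$ is an average of such variables. Hoeffding's inequality then gives $\Pr(|\tilde{\vec{y}}_k^{(N)}(\sigma)-\tilde{\vec{y}}_k(\sigma)|\geq\zeta)\leq 2\exp(-2N\zeta^2)$, and identically for $\rho$. Conditioning on the favourable event I obtain the one-sided estimates $\tilde{\vec{y}}_C(\sigma)>\tilde{\vec{y}}_C^{(N)}(\sigma)-\zeta$ and $\tilde{\vec{y}}_k(\sigma)<\tilde{\vec{y}}_k^{(N)}(\sigma)+\zeta$ for the relevant competitor. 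Chaining these with the hypothesis $\tilde{\vec{y}}_C^{(N)}(\sigma)-\zeta>e^{2\epsilon}\max_{k\neq C}(\tilde{\vec{y}}_k^{(N)}(\sigma)+\zeta)$ recovers exactly the separation $\tilde{\vec{y}}_C(\sigma)>e^{2\epsilon}\max_{k\neq C}\tilde{\vec{y}}_k(\sigma)$ that serves as the hypothesis of Theorem~\ref{thm:DP_infini}. Invoking that theorem immediately yields $C=\arg\max_k\tilde{\vec{y}}_k(\rho)=\arg\max_k\vec{y}_k(\rho)$, i.e.\ the true-probability separation $\tilde{\vec{y}}_C(\rho)>\max_{k\neq C}\tilde{\vec{y}}_k(\rho)$, which is the first claimed conclusion.

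A final application of Hoeffding, now to the estimates on $\rho$, converts this true-probability separation back into an empirical statement. Using $\tilde{\vec{y}}_C^{(N)}(\rho)+\zeta\geq\tilde{\vec{y}}_C(\rho)$ and $\max_{k\neq C}\tilde{\vec{y}}_k(\rho)\geq\max_{k\neq C}\tilde{\vec{y}}_k^{(N)}(\rho)-\zeta$ on the favourable event, the chain $\tilde{\vec{y}}_C^{(N)}(\rho)+\zeta\geq\tilde{\vec{y}}_C(\rho)>\max_{k\neq C}\tilde{\vec{y}}_k(\rho)\geq\max_{k\neq C}\tilde{\vec{y}}_k^{(N)}(\rho)-\zeta$ delivers the second claimed inequality $\tilde{\vec{y}}_C^{(N)}(\rho)+\zeta>\max_{k\neq C}\tilde{\vec{y}}_k^{(N)}(\rho)-\zeta$, again with failure probability $2\exp(-2N\zeta^2)$.

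The step I expect to be the main obstacle is the honest bookkeeping of the failure probability. Hoeffding's bound $2\exp(-2N\zeta^2)$ controls a single estimate, whereas the chains above simultaneously invoke the deviation of $\tilde{\vec{y}}_C^{(N)}$ and of a competitor, and do so for both $\sigma$ and $\rho$; a strict union bound would degrade the prefactor to something like $2K\exp(-2N\zeta^2)$. I would therefore either argue that only the single worst competitor index $k_0=\arg\max_{k\neq C}\tilde{\vec{y}}_k$ governs the relevant comparison—so that one two-sided deviation event suffices per comparison and the stated $1-2\exp(-2N\zeta^2)$ holds per comparison—or absorb the extra factor of $K$ explicitly into the statement. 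Beyond this, the only delicate point is keeping the directions of the one-sided deviations consistent (a lower bound on the $C$-estimate against upper bounds on the competitors) so that every inequality in the chain still points the correct way once the $\pm\zeta$ slack has been consumed.
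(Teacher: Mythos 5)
Your proposal follows the paper's own proof essentially step for step: Hoeffding's inequality transfers the $\zeta$-buffered empirical hypothesis on $\sigma$ to the exact-probability hypothesis $\tilde{\vec{y}}_C(\sigma)>e^{2\epsilon}\max_{k\neq C}\tilde{\vec{y}}_k(\sigma)$ of Theorem~\ref{thm:DP_infini}, that theorem is invoked, and a second application of Hoeffding on the $\rho$-estimates recovers the empirical conclusion $\tilde{\vec{y}}_C^{(N)}(\rho)+\zeta>\max_{k\neq C}\tilde{\vec{y}}_k^{(N)}(\rho)-\zeta$. The union-bound subtlety you flag is legitimate but is present in the paper's proof as well, which likewise quotes the single-estimate confidence $1-2\exp(-2N\zeta^2)$ without aggregating over the class index or over $\sigma$ and $\rho$; your worst-competitor reading (or absorbing an explicit factor of $K$) is a sound way to make that bookkeeping honest.
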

\begin{proof}[Proof of Theorem \ref{thm:DP_fini}]
 	We employ Hoeffding's inequality \cite{mohri2018foundations} to show $\tilde{\vec{y}}_k^{(N)}(\sigma)-\zeta \leq \tilde{\vec{y}}_k(\sigma)\leq \tilde{\vec{y}}_k^{(N)}(\sigma)+\zeta$ is true to probability at least $1-2\exp(-2N\zeta^2)$. This relates the finitely sampled estimates $\tilde{\vec{y}}_k^{(N)}(\sigma)$ to $\tilde{\vec{y}}_k(\sigma)$ from infinite sampling. Then we can apply the results of Theorem \ref{thm:DP_infini} for infinite sampling to prove our results. Please see Appendix~\ref{sec:appendixd} for details of the proof.
\end{proof}
As special examples, we now explore the robustness property of two discriminative learning models for binary classification: quantum neural network and quantum kernel classifiers. 
\subsection{Quantum neural network}
\begin{figure}
\centering
\includegraphics[width=0.45\textwidth]{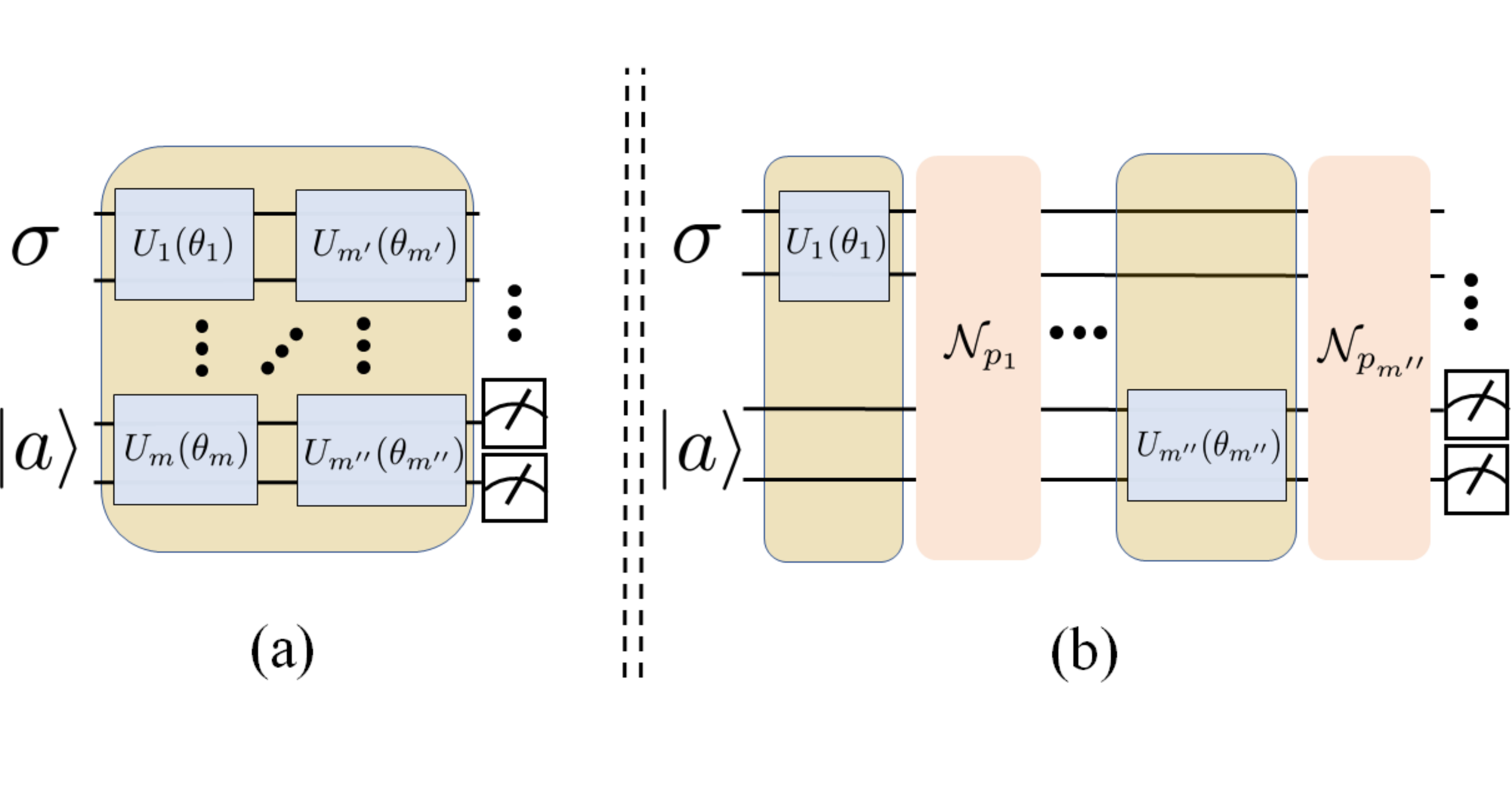}
\caption{\label{fig:noise1} 
\textit{Noiseless and noisy QNN circuits.} (a) The basic scheme of QNN (noiseless). The trainable unitary $U(\bm{\theta})$ (yellow region) is composed of the product of parameterised single-qubit gates and fixed two-qubit gates $U_i(\theta_i)$ where $i=1,...,m''$ and in the diagram above $1\leq m\leq n \leq m'\leq m''\leq nl$ where $l$ is the depth of the circuit and $n=\log_2 D$. The test state is $\sigma$ and the ancilla state is $|a\rangle$. (b) Our protocol for QNN (noisy) where the depolarisation channels $\mathcal{N}_{p_i}$ (pink region) are added to the noiseless QNN circuit.}
\end{figure} 
The quantum neural network (QNN), proposed by \cite{farhi2018classification}, is a building block for various quantum learning models \cite{schuld2018circuit,huggins2018towards,havlivcek2019supervised,farhi2018classification,benedetti2018generative,dallaire2018quantum}. The basic scheme of QNN is illustrated in Figure~\ref{fig:noise1} (a), which is a special case of the circuit in Fig.~\ref{fig:Fig1}(a). The $D$-dimensional quantum input state is $\sigma \otimes |a\rangle \langle a|$, where $\sigma$ refers to either the training or test states and $|a \rangle$ is an ancilla. The trainable unitary $U(\bm{\theta})\in \mathbb{C}^{D\times D}$ is then applied, which consists of trainable single-qubit gates and fixed two-qubit gates. Our protocol for QNN, as shown in Figure \ref{fig:noise1} (b), employs the depolarisation channels $\mathcal{N}_{p_i}$ that can appear within the QNN circuit before final measurements with POVM $\{\Pi_k\}$. 

The typical application of QNN is for binary classification, broadly used in \cite{farhi2018classification,huggins2018towards,benedetti2018generative, dallaire2018quantum}, where one makes single-qubit measurements using $\{\Pi_0,\Pi_1=\mathbf{1}-\Pi_0\}$ and $D_{meas}=2$. We can apply Theorem \ref{thm:DP_infini} directly to our scenario and we have the following corollary. 

\begin{corollary}
 \label{coro:DP_infi}
	Let the given input $\sigma$ be given the classification label `0'  and define $\tilde{\vec y}_0(\sigma)/\tilde{\vec y}_1(\sigma)\equiv B$.  In binary classification, QNN, with depolarisation channels $\mathcal{N}_{p_i}$ and $p\equiv 1- \prod_{i=1}^l (1-p_i)$, is robust against any perturbations $\sigma \rightarrow \rho$ with $\tau(\sigma,\rho)<\tau_D$ and $\epsilon=\ln(1+2(1-p)\tau_D/p)$, if 
	\begin{equation}\label{eqn:B_infi}
		B>\exp(2\epsilon)~.
	\end{equation}
\end{corollary}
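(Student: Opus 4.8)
The plan is to recognize this corollary as the binary-classification specialization of Theorem~\ref{thm:DP_infini}, so that the whole argument reduces to choosing the parameters correctly and rewriting the theorem's hypothesis as a statement about the single ratio $B$.

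First I would fix $K = 2$, giving class set $\mathcal{C} = \{0, 1\}$, and $D_{meas} = 2$, the value appropriate to the single-qubit binary measurement $\{\Pi_0, \Pi_1 = \mathbf{1} - \Pi_0\}$ used in the QNN. With $D_{meas} = 2$ the privacy budget $\epsilon = \ln(1 + D_{meas}(1-p)\tau_D/p)$ of Theorem~\ref{thm:DP_infini} collapses exactly to the $\epsilon = \ln(1 + 2(1-p)\tau_D/p)$ quoted in the corollary, so the two expressions agree without further work. Since $\sigma$ is assigned label `0', I would set $C = 0$.

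The only substantive step is translating the hypothesis. In the binary case the maximum over $k \neq 0$ runs over the lone index $k = 1$, so $\max_{k \neq 0}\tilde{\vec{y}}_k(\sigma) = \tilde{\vec{y}}_1(\sigma)$, and the theorem's requirement $\tilde{\vec{y}}_C(\sigma) > e^{2\epsilon}\max_{k\neq C}\tilde{\vec{y}}_k(\sigma)$ becomes $\tilde{\vec{y}}_0(\sigma) > e^{2\epsilon}\tilde{\vec{y}}_1(\sigma)$. Dividing through by $\tilde{\vec{y}}_1(\sigma)$ and inserting $B \equiv \tilde{\vec{y}}_0(\sigma)/\tilde{\vec{y}}_1(\sigma)$ produces exactly $B > \exp(2\epsilon)$, Eq.~\eqref{eqn:B_infi}.

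With the hypothesis and the value of $\epsilon$ both matched, I would simply invoke the conclusion of Theorem~\ref{thm:DP_infini}: every $\rho$ with $\tau(\sigma,\rho) \leq \tau_D$ keeps the label, i.e.\ $\arg\max_k \tilde{\vec{y}}_k(\rho) = 0 = C$, which is the asserted robustness. I do not expect any genuine obstacle, since the result is a corollary in the literal sense; the one point meriting care is the positivity of the denominator $\tilde{\vec{y}}_1(\sigma)$ that licenses the passage from the additive inequality to the multiplicative ratio $B$. This positivity is guaranteed for any $p > 0$ by the depolarisation floor in Eq.~\eqref{eq:ytildep}, which gives $\tilde{\vec{y}}_k(\sigma) = p/K + (1-p)\vec y_k(\sigma) \geq p/K > 0$.
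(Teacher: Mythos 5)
Your proposal is correct and matches the paper's treatment exactly: the paper presents Corollary~\ref{coro:DP_infi} as a direct application of Theorem~\ref{thm:DP_infini} with $K=2$, $D_{meas}=2$ and $C=0$, under which the hypothesis $\tilde{\vec{y}}_C(\sigma)>e^{2\epsilon}\max_{k\neq C}\tilde{\vec{y}}_k(\sigma)$ becomes precisely $B>\exp(2\epsilon)$. Your added remark that the depolarisation floor $\tilde{\vec{y}}_k(\sigma)\geq p/K>0$ justifies dividing by $\tilde{\vec{y}}_1(\sigma)$ is a small point the paper leaves implicit, but it does not constitute a different route.
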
	
Since $D_{meas}=2$ for binary classification, we note that the privacy budget $\epsilon$ is now \textit{independent} of the dimension of the problem. Therefore, even as the feature dimension of the input $\sigma$ grows, it does not affect the robustness of the classifier against adversarial examples so long as some depolarisation noise with $0<p<1$ has been added to the circuit. This independence is an interesting contrast to the result in \cite{liu2019vulnerability} which states that robustness should decrease as dimensionality of $\sigma$ grows. This contradiction is resolved by observing that, unlike in \cite{liu2019vulnerability} which places no constraints on distribution from which the input states $\sigma$ are selected, here we have Eq.~\eqref{eqn:B_infi} which imposes a constraint. 

In the finite sampling limit, we can employ Theorem \ref{thm:DP_fini} to apply to our binary classifier and we have the following corollary. 

\begin{corollary} \label{coro:DP_fini}
	Let the input $\sigma$ be given the classification label `0'  and define $(\tilde{\vec y}^{(N)}_0(\sigma)-\zeta)/(\tilde{\vec y}^{(N)}_1(\sigma)+\zeta)\equiv B$, where the probabilities are estimated using $N$ samples of the quantum circuit. Then if 
	\begin{equation}\label{eqn:B_fini}
		B>\exp(2\epsilon)~,
	\end{equation}
the binary classification performed by the QNN circuit, with depolarisation channels $\mathcal{N}_{p_i}$ and $p\equiv 1- \prod_{i=1}^l (1-p_i)$, is robust to adversarial attacks $\sigma \rightarrow \rho$ with the probability at least $1-2\exp\left(-{2N\zeta^2} \right)$ where $\tau(\rho,\sigma)\leq \tau_D$  and $\epsilon=\ln(1+2(1-p)\tau_D/p)$.
\end{corollary}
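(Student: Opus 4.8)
The plan is to obtain this corollary as a direct \emph{specialisation} of Theorem~\ref{thm:DP_fini} to the binary setting, in which the single-qubit measurement $\{\Pi_0,\Pi_1=\mathbf{1}-\Pi_0\}$ fixes $K=2$ and $D_{meas}=2$, and the given label is $C=0$. The whole content of the proof is to check that the hypotheses and the privacy budget of the general finite-sampling theorem specialise exactly to the statement of the corollary; no new estimate is needed.

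First I would note that with only two labels the maximum $\max_{k\neq C}\bigl(\tilde{\vec y}_k^{(N)}(\sigma)+\zeta\bigr)$ appearing in the hypothesis of Theorem~\ref{thm:DP_fini} collapses to the single term $\tilde{\vec y}_1^{(N)}(\sigma)+\zeta$. The theorem's hypothesis $\tilde{\vec y}_C^{(N)}(\sigma)-\zeta>e^{2\epsilon}\max_{k\neq C}(\tilde{\vec y}_k^{(N)}(\sigma)+\zeta)$ then reads $\tilde{\vec y}_0^{(N)}(\sigma)-\zeta>e^{2\epsilon}\bigl(\tilde{\vec y}_1^{(N)}(\sigma)+\zeta\bigr)$. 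Since $\tilde{\vec y}_1^{(N)}(\sigma)+\zeta>0$, I would divide through by it and invoke the definition $B\equiv(\tilde{\vec y}_0^{(N)}(\sigma)-\zeta)/(\tilde{\vec y}_1^{(N)}(\sigma)+\zeta)$ to recognise this hypothesis as precisely $B>\exp(2\epsilon)$, which is Eq.~\eqref{eqn:B_fini}. I would then fix the privacy budget: because $D_{meas}=2$, the general expression $\epsilon=\ln(1+D_{meas}(1-p)\tau_D/p)$ reduces to $\epsilon=\ln(1+2(1-p)\tau_D/p)$, matching the corollary.

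With both the hypothesis and the value of $\epsilon$ matched, the conclusion of Theorem~\ref{thm:DP_fini}---that $C=\arg\max_k\vec y_k(\rho)=\arg\max_k\tilde{\vec y}_k(\rho)$ to probability at least $1-2\exp(-2N\zeta^2)$ for every $\rho$ with $\tau(\sigma,\rho)\leq\tau_D$---transfers verbatim and gives the stated robustness guarantee. (Equivalently, one could rebuild the statement from scratch by combining Hoeffding's inequality, used to bracket $\tilde{\vec y}_k(\sigma)$ by $\tilde{\vec y}_k^{(N)}(\sigma)\pm\zeta$, with the infinite-sampling Corollary~\ref{coro:DP_infi}, but routing through Theorem~\ref{thm:DP_fini} is cleaner.) Because this is a specialisation rather than a fresh argument, I do not anticipate a genuine obstacle; the only points deserving care are confirming that $\max_{k\neq C}$ is a single term in the binary case and that the denominator $\tilde{\vec y}_1^{(N)}(\sigma)+\zeta$ is strictly positive, so that the rearrangement into the ratio $B$ is legitimate.
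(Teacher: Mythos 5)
Your proposal is correct and matches the paper's own route: the paper obtains Corollary~\ref{coro:DP_fini} exactly as you do, by specialising Theorem~\ref{thm:DP_fini} to the binary QNN classifier with $D_{meas}=2$ and $C=0$, so that the hypothesis $\tilde{\vec y}_0^{(N)}(\sigma)-\zeta>e^{2\epsilon}(\tilde{\vec y}_1^{(N)}(\sigma)+\zeta)$ becomes $B>\exp(2\epsilon)$ and $\epsilon$ reduces to $\ln(1+2(1-p)\tau_D/p)$. Your two sanity checks (the $\max_{k\neq C}$ collapsing to a single term, and positivity of the denominator before forming the ratio $B$) are exactly the right details to verify, and nothing further is needed.
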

\subsection{Quantum kernel classifier}
\label{sec:kernel}

\begin{figure}[h!]
\centering
\includegraphics[width=0.45\textwidth]{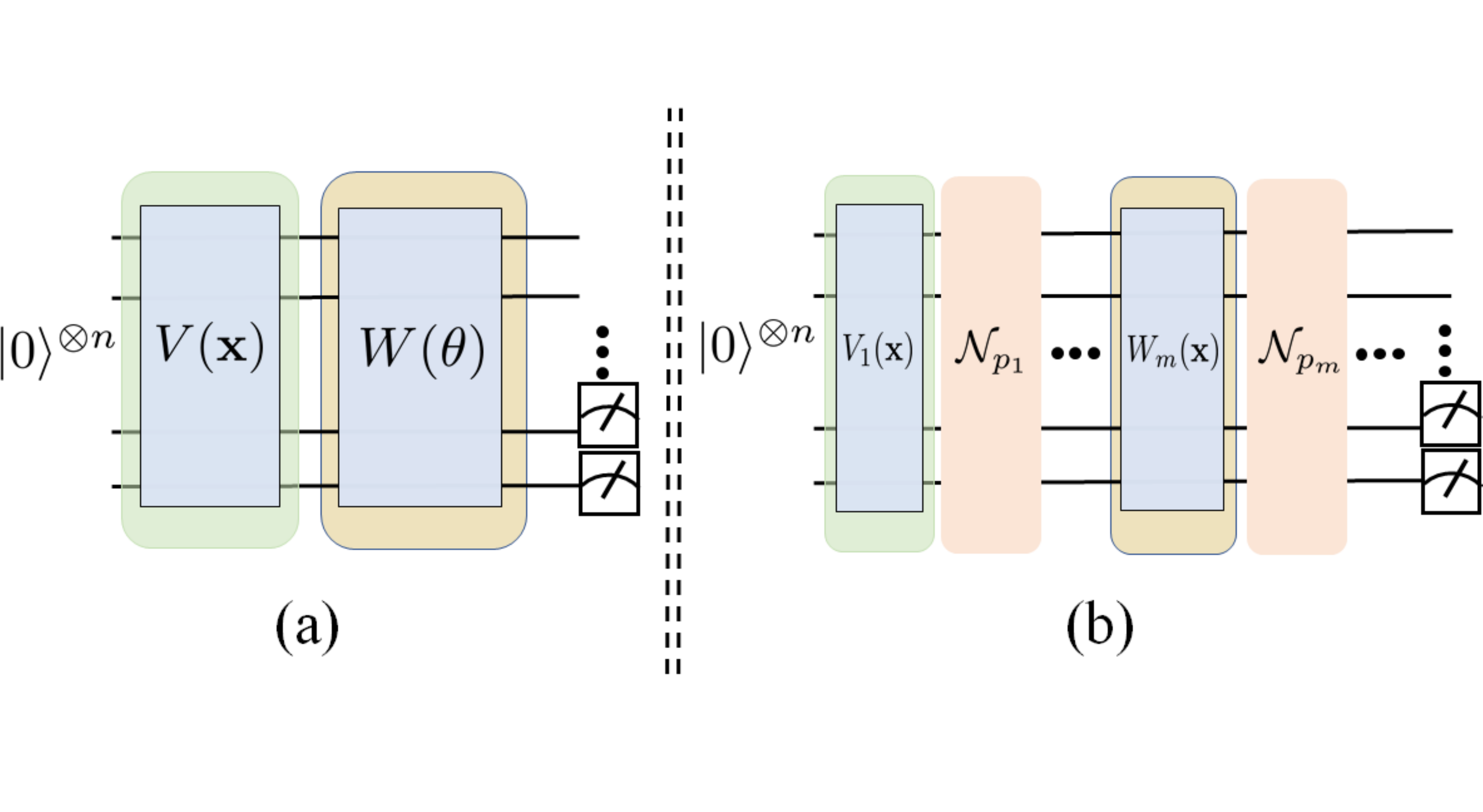}
\caption{\label{fig:noise3} \textit{Noiseless and noisy quantum kernel classifiers.} (a) A basic scheme of the quantum kernel classifier. The unitary $V(\vec x)$ (green region) takes $|0\rangle^{\otimes n}\rightarrow V(\vec x)|0\rangle^{\otimes n}$, where $n=\log_2 D$. The trainable unitary $W(\bm{\theta})$ (yellow region) is composed of trainable single-qubit gates and fixed two qubits gates, which has the same architecture as in QNN. For example, at the end we can measure in the basis $|0\rangle^{\otimes n}$ and this circuit can be used to compute the kernel $K(\vec \theta, \vec x)\equiv \langle 0|^{\otimes n}W(\bm{\theta})V(\vec x)|0\rangle^{\otimes n}$. (b) For our protocol, we can include depolarisation noise channels $\mathcal{N}_{p_i}$ (pink region) anywhere along the quantum kernel classifier.}
\end{figure}
The main idea of kernel methods is to map complex input data $\vec x$ to a higher-dimensional feature space that can then be efficiently separated \cite{goodfellow2016deep}. The generic form of a quantum kernel classifier \cite{mitarai2018quantum,havlivcek2019supervised,schuld2019quantum} is shown in Figure \ref{fig:noise3}. The output of the kernel classifier can be written as $K(\vec \theta, \vec x)\equiv\langle 0|^{\otimes n}W(\bm{\theta})V(\vec x)|0\rangle^{\otimes n}$, where $K(\vec \theta, \vec x)$ is identified with a classical kernel with test state $\vec x$ and weight vector captured by the trained $\vec \theta$ values. Here $W(\bm{\theta})$ contains the trainable parameters with the aim of minimizing the predefined loss function where the optimal occurs at $\vec \theta^*$ and $V(\vec x)|0\rangle^{\otimes n}$ refers to the kernel state that maps the input data into the higher-dimensional feature space. Thus the probability of obtaining the measurement values all `$0$' after applying $\Pi_0\equiv (|0\rangle \langle 0|)^{\otimes n}$ in the noiselss circuit is given by $\vec y_0(\vec x)=\langle 0|^{\otimes n}W(\vec \theta^*)V(\vec x)|0\rangle^{\otimes n}$. 

For a binary classification problem, the class label of $\vec x$ is $0$ if $\vec y_0(\vec x)>\vec y_1(\vec x)\equiv 1-\vec y_0(\vec x)$. In this case, $D_{meas}=D$, thus the privacy budget becomes $\epsilon=\ln(1+D(1-p)\tau_D/p)$. which grows with increasing dimensionality $D$ of the input state. Corollaries 1 and 2 then hold for the quantum kernel classifier with this modified $\epsilon$.  

\section{Numerical simulations}
We now conduct numerical simulations to illustrate our protocol for a binary QNN classifier. In particular, by leveraging the  depolarisation channel, we show how a trained QNN binary classifier has the ability to achieve certified robustness under bounded-norm adversarial attacks at testing time.  In this section, we first introduce our training dataset and the preprocessing step. We then explain the attack method that is used to evaluate the performance of our protocol. Lastly we analyse the performance of our proposed protocol. 
\subsection{Preprocessing and training procedure}
We choose to conduct our numerical simulations on the Iris dataset \cite{fisher1936use}, which has been broadly used in classical machine learning. The Iris dataset $\mathcal{D}_I=\{\sigma_i,c_i^*\}_{i=1}^{150}\in \mathbb{R}^{150\times 4} \times \mathbb{R}^{150} $ consists of three different types of Iris flowers (Setosa, Versicolour, and Virginica), where examples (belonging to Setosa) with label $c^*_i=0$ are linearly separable with respect to examples (belonging to Versicolour) with label $c^*_i=1$. 

Next, we remove all examples belonging to Virginica and denote the dataset that only contains label $c^*_i=0$ and $c^*_i=1$ as  $\mathcal{D}$, i.e., the cardinality of $\mathcal{D}$ is $100$. Then we set the fourth entry of all examples as $0$. Afterwards, we apply $l_2$  normalization to each example, i.e., $\|\sigma_i\|_2 = 1$ for any $\sigma_i\in \mathcal{D}$. Then we need to efficiently encode this classical data into quantum states \cite{schuld2017implementing}. We can then carry out the amplitude encoding method \cite{mottonen2004transformation} to encode the normalized $\sigma_i$ into a quantum state. 

Given the preprocessed dataset $\mathcal{D}$, we randomly split it into a training dataset $\mathcal{D}_{Tr}$ and a  test dataset $\mathcal{D}_{Te}$ with  $n\equiv |\mathcal{D}_{Tr}| = 60$, $|\mathcal{D}_{Te}| = 40$, and $\mathcal{D} = \mathcal{D}_{Tr}\cup  \mathcal{D}_{Te}$.  In the training procedure, we randomly sample an example   $( \sigma_i , c_i^*)$ from $ \mathcal{D}_{Tr}$ and forward $\sigma_i$ to a binary QNN classifier. For details on the circuit see Appendix~\ref{app:h}. We employ the squared loss function to train this QNN, i.e.,
\begin{equation}\label{eqn:loss_opt}
	\mathcal{L} = \frac{1}{n}\sum_{i=1}^n (c_i^*-\bar{c}_i)^2~,
\end{equation}
where $\bar{c}_i = \max_k \vec y_k(\sigma_i)\in [0,1]$ is the score vector of QNN as formulated in Subsection \ref{subsec:classification} and $\vec y_k(\sigma)$ denotes the ideal output of the QNN.   

We use the zeroth-order gradient method \cite{mitarai2018quantum} to optimize trainable parameters $\bm{\theta}$ of the QNN to minimize the loss function $\mathcal{L}$. We set the number of training epochs to $50$. The learning rate is set to $0.01$ and the total number of trainable parameters is $24$.  Figure \ref{fig:train} illustrates the training loss, training accuracy and test accuracy.  Both the training and test accuracy converges to $100\%$ after $15$ epochs (See Appendix~\ref{app:h} for more implementation details). Given the test dataset $\mathcal{D}_{Te}$, we randomly select three test examples and explore how the maximum robustness $\tau_D$ changes with varied $p$ according to Eq.~\eqref{eq:dpdefdep}, which we can rewrite as 
\begin{align} \label{eq:taudgraph}
\tau_D=\frac{(e^{\epsilon}-1)p}{D_{meas}(1-p)}.
\end{align}
Figure \ref{fig:p_tau_test} illustrates how $\tau_D$ scales with different $p$ for three different test examples with $D_{meas}=2$. Note that the constants $\epsilon$ are different for the three test examples and the test examples satisfy the condition in  Eq.~(\ref{eqn:B_infi}). In the same figure, we also plot how the test score $\tilde{\vec y}_0(\sigma)$ varies with $p$, coming from Eq.~\eqref{eq:ytildep} for the case of binary classification $K=2$ 
\begin{align}\label{eq:ykgraph}
\tilde{\vec y}_k(\sigma)=p/2+(1-p)\vec y_k(\sigma).
\end{align}  
\begin{figure} 
\centering 
\includegraphics[width=0.5\textwidth]{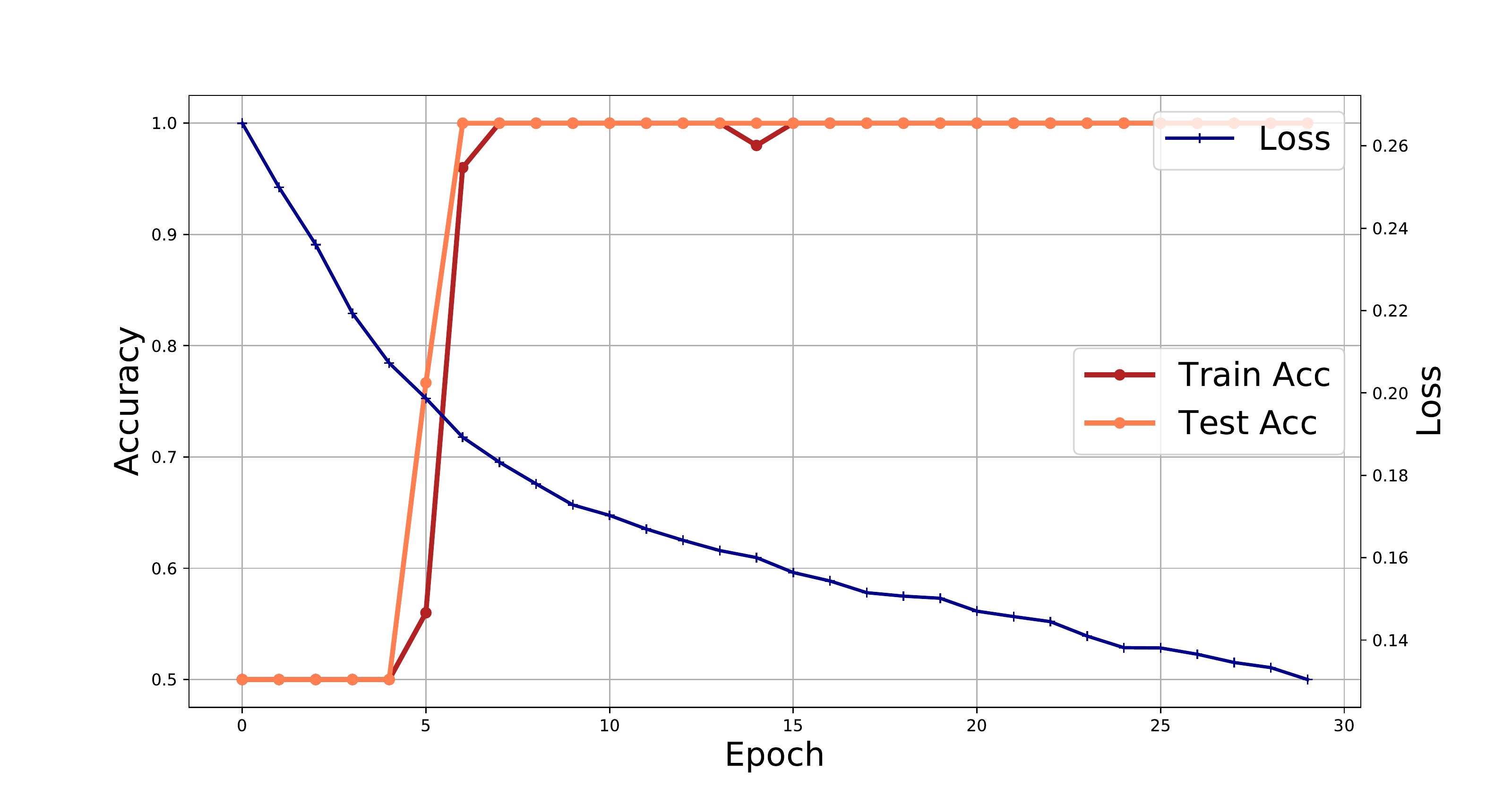} 
\caption{\small{The blue, red, and orange lines respectively show the variation of loss, the training accuracy, and the test accuracy with respect to the number of epochs. The loss continuously decreases for longer epochs, while the training accuracy and test accuracy increases and converges sharply around epoch 5.}}
\label{fig:train}  
\end{figure}  

\begin{figure}         
\includegraphics[width=0.5\textwidth]{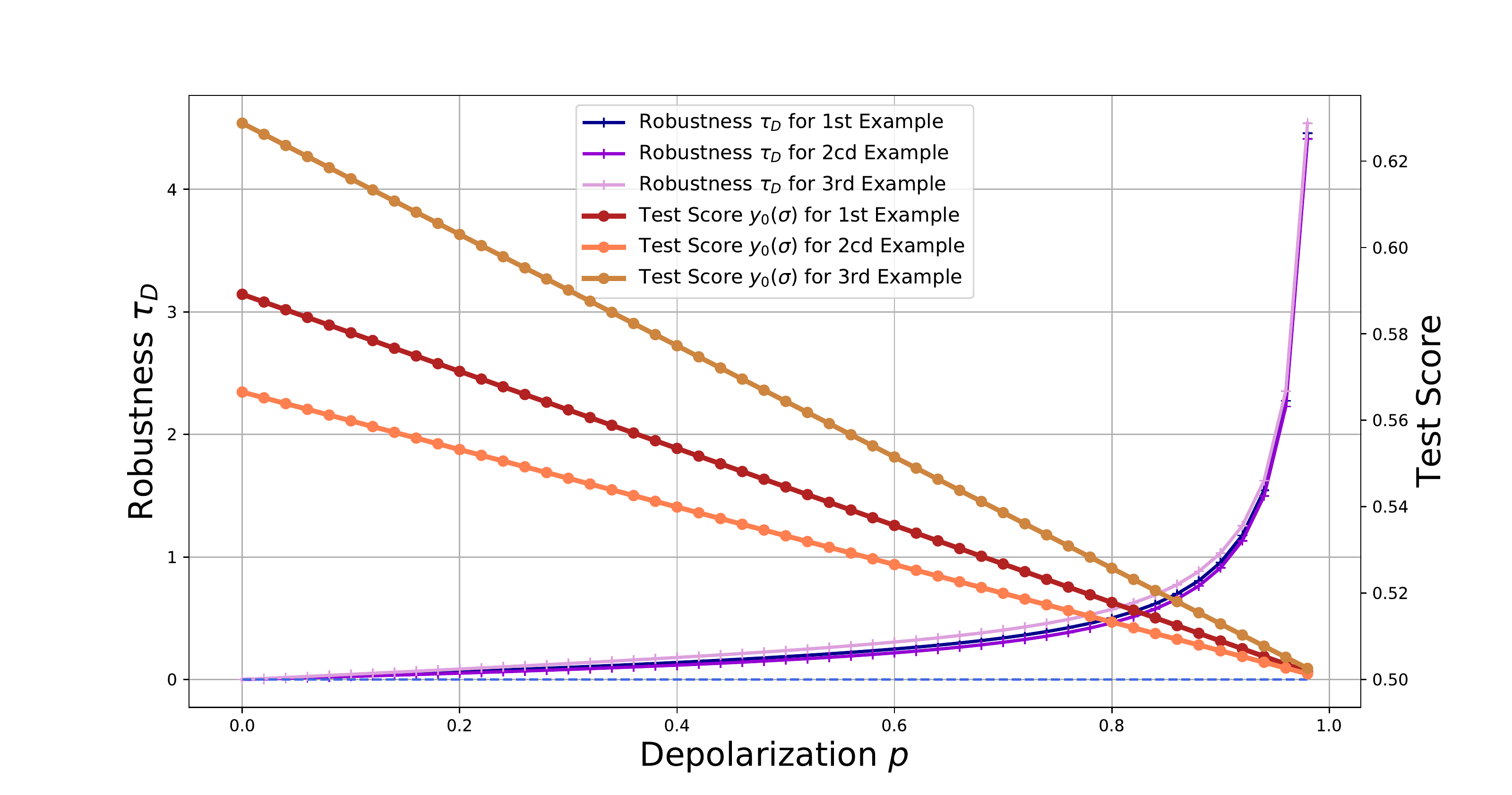}
\caption{\small{For three different test examples, we see how the simulated results for robustness $\tau_D$ and test scores $\tilde{\vec y}_0(\sigma)$} varies with respect to $p$. All three test examples have label $0$, where the test score is above the blue dotted line. The closed-form expressions for the variation of $\tau_D$ and $\tilde{\vec y}_0(\sigma)$ are shown in Eqs.~\eqref{eq:taudgraph} and ~\eqref{eq:ykgraph}.} 
\label{fig:p_tau_test} 
\end{figure}

\subsection{Evaluation metrics and adversarial attack methods}
To evaluate the performance of our protocol, we adopt an adversarial attack method that is widely employed in classical machine learning. It is known as the iterative-fast gradient sign method (I-FGSM) with $l_2$-bounded norm \cite{liu2016delving,dong2018boosting,madry2018towards} that aims to attack the test dataset $\mathcal{D}_{\text{Te}}$ to make incorrect predictions when using a trained classifier. If we denote the original input by $\vec x$ and the adversarial example at the $t^{\text{th}}$ updating step when using the I-FGSM by $\vec x'(t)$, then 
\begin{align}\label{eqn:adv}
	& \vec{x'}_{(0)} = \vec x \nonumber \\
&\vec {x'}_{(t+1)} = \vec {x'}{(t)} + \alpha\cdot \text{sign}(\nabla_{\vec x}\mathcal{L}),
\end{align} 
where  $\alpha = L/T$ is the learning rate with $\|\vec x- \vec x'\|_2 \leq L$ and $\mathcal{L}$ is the loss function formulated in Eq.~\eqref{eqn:loss_opt}. 
\subsection{Adversarial attack at test time}
Here we employ our trained classifier and the adversarial attack method formulated above to quantify the performance of our protocol. Recall that Corollaries \ref{coro:DP_infi} and \ref{coro:DP_fini} are the special cases of Theorems \ref{thm:DP_infini} and \ref{thm:DP_fini} when applied to binary QNN classifiers and work in the regime of using infinite and finite sampling of the output probabilities respectively. Here we explore how our protocol protects the binary QNN classifier against adversarial attacks under these two settings.

\begin{table*}[]
\centering
\resizebox{0.75\textwidth}{!}{%
\begin{tabular}{c|c|c|c|c}
\hline
\hline
Infinite Precision Case  ($n_{samp}=\infty$) & $p=0, \tau_D=0 $ & $p^{(1)}, \tau_D^{(1)} $ & $p^{(2)}, \tau_D^{(1)}$ & $p^{(1)}, \tau_D^{(2)}$ \\ \hline
$\tilde{\vec y}_0(\rho)$ & $58.92\%$ (label $0$)   &   $52.96\%$   (label $0$) & $57.11\%$ (label $0$) &  $49.42\%$  (label $1$)\\ \hline
Finite Precision Case ($p^{(1)}, \tau_D^{(1)}$) & --- & $n_{samp}=50$ & $n_{samp}=500$ & $n_{samp}=5000$ \\ \hline
$\tilde{\vec y}_0(\rho)$ & --- & 44.32\% (label 1) & 55.80\% (label 0) & 53.88\% (label 0) \\ \hline
\hline
\end{tabular}%
} 
\caption{\small{We list the test scores of the selected test example $\{\rho, \tilde{y}\}$ after bounded-norm adversarial attacks in both the infinite and finite sampling cases. The parameters $p=0, \tau_D=0$ refers to the test score under in the absence of any depolarisation noise in the circuit. The other parameter settings are  $\{p^{(1)}=0.5, \tau_D^{(1)}=0.02 \} $, $\{p^{(2)}=0.1, \tau_D^{(1)}=0.02 \} $ and $\{p^{(1)}=0.5, \tau_D^{(2)}=0.2\}$.}} 
\label{tab:my-table}
\end{table*}  

\textit{The infinite sampling case.}  At testing time, we randomly sample an example  $(\rho=\ket{\vec x}\bra{\vec x}, \tilde{y})$  from $\mathcal{D}_{Te}$ to investigate its robustness $\tau_D$ with respect to different level of depolarisation noise $p$.  Without loss of generality,  the original test example has label $\tilde{y}=0$.  We set three different values of $p$ and $\tau_D$: $\{p^{(1)}=0.5, \tau_D^{(1)}=0.02 \} $; $\{p^{(2)}=0.1, \tau_D^{(1)}=0.02 \} $ and $\{p^{(1)}=0.5, \tau_D^{(2)}=0.2\}$.  From Eq.~(\ref{eq:dpdefdep}), their corresponding privacy budgets are $\epsilon_1=1.04$, $\epsilon_2=1.36$ and $\epsilon_3=1.4$. Given our input $\rho$, the outputs of our trained classifier with added depolarizsation noise are $\Pr(\tilde{\vec y}^{(1)}(\rho)=0)=54.46\%$, $\Pr(\tilde{\vec y}^{(2)}(\rho)=0)=58.04\%$ and  $\Pr(\tilde{\vec y}^{(3)}(\rho)=0)=54.46\%$, where the corresponding constants $B$ defined in Corollary \ref{coro:DP_infi} is $B^{(1)}=1.20$, $B^{(2)}=1.38$, and $B^{(3)}=1.20$, respectively.  Following the condition for robustness in Eq.~\eqref{eqn:B_infi}, we have confidence that the classifier is robust to adversarial attacks if $B> e^{2\epsilon}$. A simple comparison indicates that robustness is guaranteed when $\{p=0.5, \tau_D=0.02 \} $, since $B^{(1)} >e^{2\epsilon_1}=1.08$ while $B^{(2)}<e^{2\epsilon_2}=1.85$ and $B^{(3)}<e^{2\epsilon_3}=1.96$. 

To validate the correctness  of our theoretical results, we employ I-FGSM to attack our trained classifier, where we identify the $l_2$-norm bound with its corresponding $\tau_D$ value. The left panel of  Figure~\ref{fig:attck_inifi} demonstrates the simulation results and Table~\ref{tab:my-table} shows the final test score of the attacked input.  The classifier with the first  setting $\{p^{(1)}=0.5, \tau_D^{(1)}=0.02 \} $  is robust to the bounded-norm adversarial attacks,  where the predicted label of $\tilde{\bm{x}}$ is still `0'.  For the third setting when $\{p^{(3)}=0.5, \tau_D^{(3)}=0.2\}$,  the adversary can easily perturb the input and lead the classifier to give the wrong prediction.  In particular, the adversary can easily perturb the input to cross the classification boundary, as highlighted by the purple line.  For the second setting with $\{p^{(2)}=0.1, \tau_D^{(2)}=0.05 \}$, the classifier correctly predicts the label, while our protocol cannot provide any promises, since Theorem \ref{thm:DP_infini} and Corollary \ref{coro:DP_infi} provides only sufficient conditions for robustness.  The above three simulation results are then in accordance with our theoretical results.   
\begin{figure*}
\includegraphics[width=0.48\textwidth]{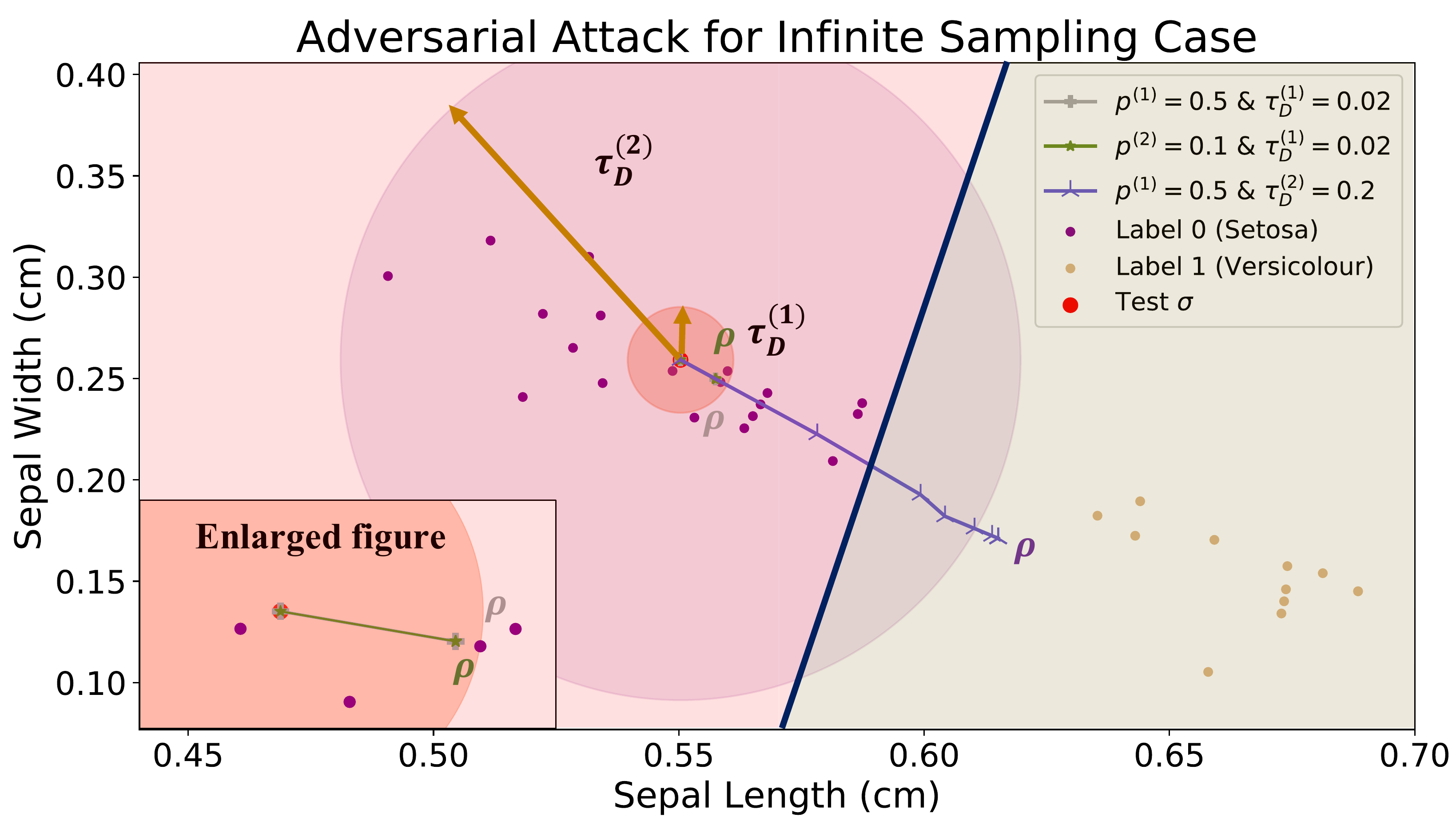}
\includegraphics[width=0.49\textwidth]{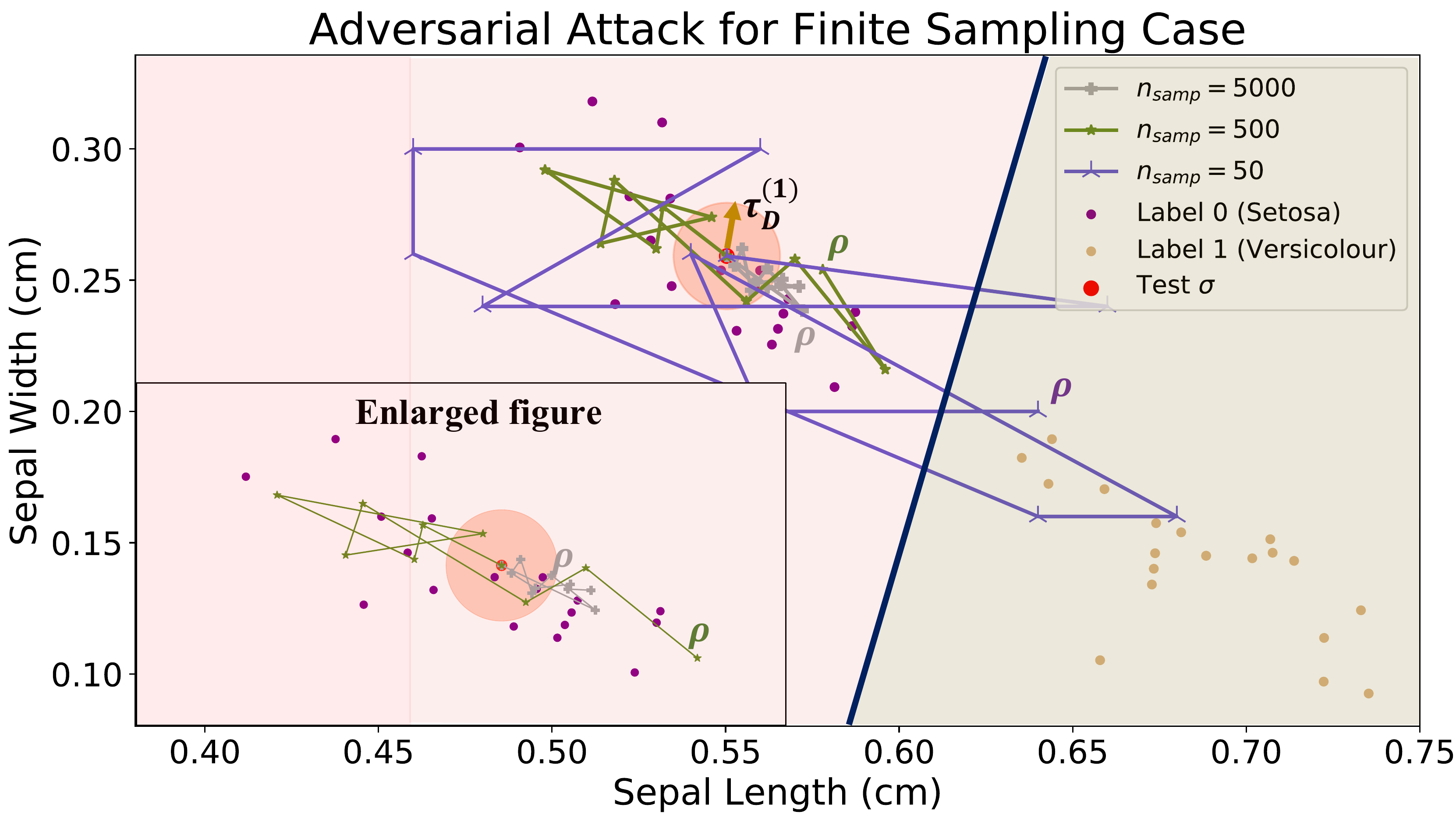} 
\caption{\small{\textit{The robustness of our protocol to adversarial examples.} (\textit{Left}) The left panel illustrates a bounded-norm attack on the Iris dataset in the infinite sampling case. The inner and outer circle regions indicate the robustness values $\tau_D^{(1)}=0.02$  and $\tau_D^{(2)}=0.2$ respectively. The thick purple line is a trained hyperplane of our QNN classifier. The dotted arrows indicate how an adversary iteratively attacks the input $\rho$ under three different settings of  $\{p^{(1)}=0.5, \tau_D^{(1)}=0.02 \} $; $\{p^{(2)}=0.1, \tau_D^{(1)}=0.02 \} $ and $\{p^{(1)}=0.5, \tau_D^{(2)}=0.2\}$, where the aim of the adversary is  to induce the classifier to output the wrong prediction.  The inner plot enlarges the part of the central figure near the test example. (\textit{Right}) The right panel illustrates the bounded-norm attack in the finite precision case. The circle region indicates the robustness value $\tau_D^{(1)}=0.02$ and $p^{(1)}=0.5$. The dotted arrows indicate the path of an adversary that iteratively attacks the input under $n_{samp}=50, 500$ and $5000$, where the adversary aims to induce  the classifier to output the wrong prediction. The inner plot enlarges the part of the central figure near the test example.}} 
\label{fig:attck_inifi} 
\end{figure*}

\textit{The finite sampling case.}   The only difference in the finite sampling case is the acquisition of the output of our trained classifier. The same test example $(\rho=\ket{\vec x}\bra{\vec x}, \tilde{y})$ is employed. The hyperparameters are set as  $\{p^{(1)}=0.5,  \tau_D^{(1)}=0.02 \}$ and from Eq.~\eqref{eq:dpdefdep}, the privacy budget $\epsilon=1.04$ is fixed. We set three different sampling number values $n_{samp}$ to explore how $n_{samp}$ affects the robustness guarantees, where $n_{samp}^{(1)}=50$, $n_{samp}^{(2)}=500$, and $n_{samp}^{(3)}=5000$. The corresponding three approximated test scores are  $\Pr(\tilde{\vec{y}}^{(1)}=0)=0.515$, $\Pr(\tilde{\vec{y}}^{(2)}=0)=0.529$ and $\Pr(\tilde{\vec{y}}^{(3)}=0)=0.552$.   The corresponding parameters $B$ are $B^{(1)}=1.06$, $B^{(2)}=1.124$, and $B^{(3)}=1.23$ with respect to $n_{samp}^{(1)}$,  $n_{samp}^{(2)}$, and $n_{samp}^{(3)}$.  Following the results of Theorem \ref{thm:DP_fini} and Corollary \ref{coro:DP_fini}, with probability at least $1-2\exp\left(-{2n_{samp}\zeta^2} \right)$, the trained classifier with added depolarisation noise is robust to adversarial attacks if ${B}> e^{2\epsilon}$. By setting $\zeta=0.95$, a simple inspection shows that  $n_{samp}=5000 $ guarantees robustness. Analogous to the infinite sampling case, we employ a bounded-norm adversary to confirm the correctness of our theory result, where the simulation results are shown in the right panel of Figure~\ref{fig:attck_inifi}. 

For more details on the implementation of the classifier and perfomance analysis of our protocol please see Appendix~\ref{app:h}.  
\section{Advantages of protocol}
Adversarial settings naturally occur when data needs to be delegated to different parties, for instance in a client-server setting and in multiparty computing. When this data is in the form of quantum states before processing using a quantum classifier, our protocol currently provides the only exisiting method to protect the general quantum classifier against arbitrary adversarial examples and also includes a theoretically provable bound. Furthermore, it can take advantage of certain exisiting quantum noise in a quantum classifier, like depolarisation noise, to provide protection against adversarial examples thus obviating the need for error-correction or error-mitigation if no other noise sources are present. Moreover, even if the test score is diminished in presence of depolarisation noise, its original value in the absence of any quantum noise can be retrieved by simply increasing the number of times one samples from the classifier. This sample complexity increases with the amount of exisiting depolarisation noise and is independent of the dimension of the state itself. 

Utilizing quantum noise like depolarisation noise also has certain advantages over classical methods for classical data in improving robustness against adversarial examples. We discuss this below. 
\subsection{Comparison to the best known classical protocol} 
While in the quantum case the theoretical bound on robustness is independent of the details of the classification model and is simple to compute, this is not true in the best known classical protocol. Before elaborating on this quantum advantage, we briefly review the classical results. 

Following the results of \cite{lecuyer2019certified}, classical $\epsilon$-differential privacy of a classification algorithm is obtained by adding noise sampled from the Laplacian distribution $\mathcal{N}(\bm{z},\kappa)$ to the trained classifier. This is commonly known as the Laplace mechanism. For numerical functions \footnote{For non-numerical functions, the exponential mechanism is employed.}, the only other common method to attain differential privacy is the Gaussian mechanism, which adds noise sampled from the Gaussian distribution. However, this leads to classical $(\epsilon, \delta)$-differential privacy where $\delta \neq 0$, so cannot be directly compared to our quantum scenario where $\delta=0$. The Laplacian distribution used in the Laplace mechanism can be written as
 \begin{align}\label{eqn:lap_mec}
  \mathcal{N}(\bm{z},\kappa)= \frac{\sqrt{2}}{2\kappa}\exp{\left(\frac{-|\bm{z}|}{\sqrt{2}\kappa}\right)}~, \text{with}~\kappa= \frac{\Delta f L}{\epsilon},
 \end{align}
where $\kappa $ refers to the variance of the Laplacian distribution and $L$ is the upper-bounded $l_2$ norm between original input $\vec x$ and attacked input $\vec x'$ such that classical $\epsilon$-differential privacy is preserved. The sensitivity $\Delta f$ of the function $f(\cdot)$ applied at a layer of the neural network classifier just before the Laplacian noise is injected is defined as 
\begin{align} \label{eq:deltaf}
\Delta f=\max_{\vec x, \vec x'}||f(\vec x)-f(\vec x')||_2/||\vec x-\vec x'||_2.  
\end{align}

The classical protocol runs in the following way. In the testing phase, the adversarial example $\vec x'$, where $\|\vec x-\vec x'\|_2\leq L$ and $\vec x$ is the original test example, is inserted into the trained classifier $\vec y(\cdot)$. The predicted label for $\vec x'$ is obtained by invoking $\vec y(\vec x')$ a total $N$ times. For every run of $\vec y(\vec x')$, the noise $z_{i,j}$ with $i=1,...,N$ is independently sampled from $\mathcal{N}_L(\bm{z},\kappa)$ and applied to the input to some layer $j$ of the neural network realising the classifier. Let $N_k$ denote the number of times that the predicted label is $k$, so the probability of the predicted label being $k$ is given by $N_k/N$. Then, similarly to Theorem \ref{thm:DP_fini}, we can write the following condition for robustness of the $K$-class classifier under the Laplace mechanism. 

\begin{lemma}[modified from \cite{lecuyer2019certified}]\label{lem:lap}
	Let $\vec x$ be the input to the $K$-multiclass classifier, which is endowed with classical $\epsilon$-differential privacy under the Laplace mechanism, with $\epsilon=\Delta f L/\kappa$, as formulated in Eq.~\eqref{eqn:lap_mec}. Let $C$ be the label of $\vec x$. Then with probability at least $1-\zeta$, the classifier is robust to any adversarial example $\vec x'$ with $\|\vec x-\vec x'\|_2\leq L$ if 
\begin{equation}\label{eqn:lem_lap_1}
	L=\frac{\epsilon \kappa}{\Delta f}< \frac{\kappa}{2 \Delta f}\ln \left(\frac{\frac{N_C}{N} -\sqrt{\frac{1}{2N}\ln\left(\frac{2}{1-\zeta} \right)}}{\max_{k \neq C} \frac{N_k}{N} + \sqrt{\frac{1}{2N}\ln\left(\frac{2}{1-\zeta} \right)}}\right).
\end{equation} 
\end{lemma}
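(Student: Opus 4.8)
The plan is to reproduce the logical skeleton of the finite-sampling quantum result (Theorem~\ref{thm:DP_fini}), with classical $\epsilon$-differential privacy now playing the role that Lemma~\ref{lem:DP-robust} played in the quantum setting. The hypothesis supplies the crucial input: the Laplace mechanism with $\kappa=\Delta f L/\epsilon$, calibrated to the sensitivity $\Delta f$ of Eq.~\eqref{eq:deltaf}, endows the randomized classifier with classical $\epsilon$-differential privacy for every pair of inputs satisfying $\|\vec x-\vec x'\|_2\leq L$; this is exactly the calibration established in \cite{lecuyer2019certified}, which I would import rather than re-derive. Writing $p_k(\vec x)$ for the probability that the noise-injected classifier assigns label $k$ to $\vec x$ (so that $N_k/N$ is its empirical estimate), I first apply the definition of classical $(\epsilon,0)$-differential privacy to the singleton sets $\mathcal{S}=\{k\}$. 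This yields the two-sided inequalities $p_k(\vec x)\leq e^{\epsilon}p_k(\vec x')$ and $p_k(\vec x')\leq e^{\epsilon}p_k(\vec x)$ for every label $k$.

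Next I would convert these differential-privacy inequalities into a robustness condition on the true probabilities. Robustness requires that $\vec x'$ keep the label $C$, i.e. $p_C(\vec x')>\max_{k\neq C}p_k(\vec x')$. Lower-bounding the left side by $e^{-\epsilon}p_C(\vec x)$ and upper-bounding each competitor by $e^{\epsilon}p_k(\vec x)$, a sufficient condition is $p_C(\vec x)>e^{2\epsilon}\max_{k\neq C}p_k(\vec x)$ — precisely the same $e^{2\epsilon}$ gap that appears in Theorems~\ref{thm:DP_infini} and \ref{thm:DP_fini}. This is the infinite-sampling form of the claim, and all that remains is to transfer it to the empirical quantities $N_k/N$.

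The finite-sampling step is a standard concentration argument. Each independent run of the classifier with freshly sampled Laplacian noise is a Bernoulli trial for the event ``predicted label $=k$,'' so $N_k/N$ is an empirical mean of $N$ i.i.d. indicators with expectation $p_k(\vec x)$. Applying Hoeffding's inequality \cite{mohri2018foundations} and calibrating the margin $\zeta_{\mathrm{err}}=\sqrt{\tfrac{1}{2N}\ln(2/(1-\zeta))}$ to the confidence level quoted in the statement, I obtain, with the asserted probability, the conservative bounds $p_C(\vec x)\geq N_C/N-\zeta_{\mathrm{err}}$ and $p_k(\vec x)\leq N_k/N+\zeta_{\mathrm{err}}$. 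Substituting these endpoints into the gap condition $p_C(\vec x)>e^{2\epsilon}\max_{k\neq C}p_k(\vec x)$ produces the finite-sample sufficient condition $N_C/N-\zeta_{\mathrm{err}}>e^{2\epsilon}(\max_{k\neq C}N_k/N+\zeta_{\mathrm{err}})$. Isolating $e^{2\epsilon}$, taking logarithms so that $2\epsilon<\ln\!\big((N_C/N-\zeta_{\mathrm{err}})/(\max_{k\neq C}N_k/N+\zeta_{\mathrm{err}})\big)$, multiplying through by $\kappa/\Delta f$, and finally inserting $L=\epsilon\kappa/\Delta f$ from the relation $\epsilon=\Delta f L/\kappa$ reproduces Eq.~\eqref{eqn:lem_lap_1}.

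The part I expect to be delicate is the probabilistic bookkeeping in the final step, rather than any individual estimate. The confidence bound on $p_C$ and the confidence bound on the worst competitor $\max_{k\neq C}p_k$ must hold \emph{jointly}, so the choice of $\zeta_{\mathrm{err}}$ has to absorb a union bound over the relevant labels while still collapsing to the single $\ln(2/(1-\zeta))$ expression quoted in the statement. Equally, I must keep the direction of each inequality consistent — a lower bound for the correct label and an upper bound for every competitor — so that substituting into the $e^{2\epsilon}$ condition preserves sufficiency rather than yielding a merely necessary condition. Once that accounting is fixed, the remainder is the same $e^{2\epsilon}$-plus-Hoeffding template already used for the quantum theorems.
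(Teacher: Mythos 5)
Your proposal is correct and follows exactly the route the paper itself relies on: the paper supplies no standalone proof of Lemma~\ref{lem:lap} (it is imported from \cite{lecuyer2019certified}), but its quantum analogue, Theorem~\ref{thm:DP_fini} in Appendix~\ref{sec:appendixd}, is proved by precisely your template --- differential privacy applied to singleton outcome sets to get the $e^{2\epsilon}$ gap condition on the true probabilities, then Hoeffding to transfer that condition to the empirical frequencies $N_k/N$, then the substitution $\epsilon=\Delta f L/\kappa$.

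The one step you flag as delicate closes cleanly, and it is worth recording how. You do not need a union bound over all $K$ labels: it suffices to take a one-sided lower bound on $p_C(\vec x)$ and a one-sided upper bound on $p_{k^*}(\vec x)$ for the \emph{single} true maximizer $k^*=\arg\max_{k\neq C}p_k(\vec x)$, since $p_{k^*}(\vec x)\leq N_{k^*}/N+\zeta_{\mathrm{err}}\leq \max_{k\neq C}N_k/N+\zeta_{\mathrm{err}}$. Each one-sided Hoeffding bound fails with probability at most $\exp(-2N\zeta_{\mathrm{err}}^2)$, so the total failure probability is $2\exp(-2N\zeta_{\mathrm{err}}^2)$. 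However, with the margin quoted in the statement, $\zeta_{\mathrm{err}}=\sqrt{\tfrac{1}{2N}\ln\left(2/(1-\zeta)\right)}$, this failure probability equals $1-\zeta$, so the conclusion actually holds with probability at least $\zeta$, not $1-\zeta$; obtaining confidence $1-\zeta$ would require $\ln(2/\zeta)$ inside the square root (consistent with the $1-2\exp(-2N\zeta^2)$ convention of Theorem~\ref{thm:DP_fini}). This mismatch sits in the lemma as stated rather than in your argument --- your calibration ``to the confidence level quoted in the statement'' inherits it --- but a complete write-up should either flag it or adopt the $\ln(2/\zeta)$ margin.
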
  
This means that this best available classical theoretical bound to $L$ depends on $\Delta f$, which in general is dependent on both the details of the classification model used and the layer of the neural network in which the Laplacian noise is injected. However, in the quantum scenario with depolarisation noise, we see that the robustness bound is independent of both $U$, the circuit realising the quantum classifier, as well as the location or locations of noise injection. This means that the adversarial robustness bound is universal for all quantum classifiers. 

We can see this from the fact that the final state of the quantum circuit after applying depolarisation noise in layers $1$ to $l$ depends only on the product $\prod_{i=1}^l(1-p_i)$, which is independent of $U$ and invariant under any re-ordering of the layers. This simplicity in the quantum case results from two facts: that the `noisy' part of depolarisation noise lies in injecting a maximally-mixed channel with a certain probability and that unitary $U$ operations realising any quantum classifier are unital (i.e., the identity operator $\mathbf{1}$ remains invariant under $U$). On the other hand, there is no known classical equivalent of this property that also gives rise to differential privacy. 

The dependence of $\Delta f$ on the details of the classifier in the most general cases also leads to a difficulty in the computation of $\Delta f$ and is often intractable except in the simplest cases \cite{lecuyer2019certified}. This means that, unlike in the quantum case, the corresponding classical bound on robustness $L$ cannot be derived in closed form from Eq.~\eqref{eqn:lem_lap_1} in the most general case. 

However, in special simple cases we can provide quantitative examples of this quantum advantage. As a simple illustration, we can look at the binary classifier for the kernel perceptron, which can be written as 
\begin{align}
\vec y(\vec x)=
\begin{pmatrix}
\vec y_0(\vec x)\\
1-\vec y_0(\vec x)
\end{pmatrix}, ~ \vec y_0(\vec x) = \sum_{i=1}^M w_i^*y_i^*K(\vec x^*_i, \vec x),
\end{align}
where $\{(\vec x_i^*, y^*_i)\}_{i=1}^M$ are the $M$ training examples and $\{w_i^*\}_{i=1}^M $ are trained parameters of the classifier. We can consider the polynomial kernel
\begin{align}
K(\vec x^*_i, \vec x)=(\vec x^*_i \cdot \vec x)^n,
\end{align}
where $n$ is the kernel degree and $n=1$ is the special case of the linear kernel. We now have the following theorem.

\begin{theorem}\label{thm:thm3}
	We have a binary classifier $\vec y(\vec x)=(\vec y_0(\vec x), 
1-\vec y_0(\vec x))^T$ where $\vec y_0(\vec x)=\sum_{i=1}^M w^*_i y_i K(\vec x^*_i, \vec x)$ with the polynomial kernel $K(\vec x^*_i, \vec x)=(\vec x^*_i \cdot \vec x)^n$. Let $\vec x$ denote all correctly labelled test examples. We now implement the Laplace mechanism in this classifier where the sensitivity is $\Delta f \equiv ||\vec y(\vec x)-\vec y(\vec x')||_2/||\vec x-\vec x'||_2$ and the privacy budget is $\epsilon=\Delta f L/\kappa$. Let us choose $\tilde{\vec y}_0(\vec x)>\exp(2\epsilon)\tilde{\vec y}_1(\vec x)$ and define $B \equiv \vec y_0(\vec x)/\vec y_1(\vec x)$. We can define the function $g(\cdot)$ for our noisy classifier where $g(B)=\tilde{\vec y}_0(\vec x)/\tilde{\vec y}_1(\vec x)$. Then the classifier is robust under any adversarial example $\vec x'$ where $||\vec x-\vec x'||_2\leq L$ and 
\begin{align} \label{eq:L2classicalbound}
L\leq \frac{1}{M}\frac{\kappa}{2\sqrt{2}n\max\{|w^*_i y_i|\}_{i=1}^M}\ln g(B).
\end{align}
\end{theorem}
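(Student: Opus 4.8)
The plan is to reduce the theorem to the generic robustness condition of Lemma~\ref{lem:lap} and then supply the one missing ingredient: an explicit upper bound on the sensitivity $\Delta f$ of the polynomial-kernel classifier. First I would specialise Lemma~\ref{lem:lap} to binary classification with $C=0$ in the infinite-sampling limit, where the $\sqrt{(1/2N)\ln(\cdot)}$ correction terms vanish and $N_C/N\to\tilde{\vec y}_0(\vec x)$, $\max_{k\neq C}N_k/N\to\tilde{\vec y}_1(\vec x)$. This turns the condition into $L<\frac{\kappa}{2\Delta f}\ln(\tilde{\vec y}_0(\vec x)/\tilde{\vec y}_1(\vec x))=\frac{\kappa}{2\Delta f}\ln g(B)$, which is exactly the hypothesis $\tilde{\vec y}_0(\vec x)>e^{2\epsilon}\tilde{\vec y}_1(\vec x)$ rewritten via $\epsilon=\Delta f L/\kappa$; in particular the hypothesis guarantees $\ln g(B)>0$. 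The remaining task is to replace the unknown $\Delta f$ by the explicit upper bound $\overline{\Delta f}=\sqrt{2}\,Mn\max_i|w_i^* y_i|$: since $\Delta f\le\overline{\Delta f}$ gives $\frac{\kappa}{2\Delta f}\ge\frac{\kappa}{2\overline{\Delta f}}$, the requirement $L\le\frac{\kappa}{2\overline{\Delta f}}\ln g(B)$ becomes a sufficient (if conservative) condition for robustness, and this is precisely Eq.~\eqref{eq:L2classicalbound}.

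The core computation is the sensitivity bound, which I would build in three moves. Because the classifier has only two components, $\vec y(\vec x)-\vec y(\vec x')=(\vec y_0(\vec x)-\vec y_0(\vec x'))(1,-1)^{\tp}$, so $\|\vec y(\vec x)-\vec y(\vec x')\|_2=\sqrt{2}\,|\vec y_0(\vec x)-\vec y_0(\vec x')|$; this produces the $\sqrt{2}$ appearing in the bound. Expanding through the kernel, $\vec y_0(\vec x)-\vec y_0(\vec x')=\sum_{i=1}^M w_i^* y_i\big((\vec x_i^*\cdot\vec x)^n-(\vec x_i^*\cdot\vec x')^n\big)$, the key per-term estimate uses the factorisation $a^n-b^n=(a-b)\sum_{j=0}^{n-1}a^j b^{n-1-j}$ with $a=\vec x_i^*\cdot\vec x$ and $b=\vec x_i^*\cdot\vec x'$, giving $|a^n-b^n|\le n\,\max(|a|,|b|)^{n-1}\,|a-b|$.

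The estimate then closes using the preprocessing normalisation of the data, i.e.\ that all examples are $l_2$-normalised so $\|\vec x_i^*\|_2=\|\vec x\|_2=\|\vec x'\|_2=1$. Cauchy--Schwarz yields $|a-b|=|\vec x_i^*\cdot(\vec x-\vec x')|\le\|\vec x-\vec x'\|_2$ and $\max(|a|,|b|)\le1$, so each kernel difference is bounded by $n\|\vec x-\vec x'\|_2$. Summing over $i$ and extracting the largest coefficient gives $|\vec y_0(\vec x)-\vec y_0(\vec x')|\le M\max_i|w_i^* y_i|\,n\,\|\vec x-\vec x'\|_2$, hence $\Delta f\le\sqrt{2}\,Mn\max_i|w_i^* y_i|=\overline{\Delta f}$. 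Substituting $\overline{\Delta f}$ into $L\le\kappa\ln g(B)/(2\overline{\Delta f})$ reproduces Eq.~\eqref{eq:L2classicalbound} and completes the argument; throughout, $g$ is carried along abstractly via $\ln g(B)$ and never needs to be evaluated in closed form.

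The main difficulty is really a care point rather than a deep obstacle: obtaining the correct constants in the sensitivity bound requires the $a^n-b^n$ factorisation together with simultaneous control of $|a-b|$ and $\max(|a|,|b|)^{n-1}$ via the unit-normalisation of inputs, and one must track the direction of the inequality, since it is an \emph{upper} bound on $\Delta f$ that produces a valid lower threshold for robustness.
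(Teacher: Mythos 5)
Your proposal is correct and follows essentially the same route as the paper's own proof: the identical $u^n-v^n$ factorisation with unit-normalised inputs and Cauchy--Schwarz to get $\Delta f \leq \sqrt{2}\,Mn\max_i\{|w_i^* y_i|\}$, combined with $L=\kappa\epsilon/\Delta f$ and the hypothesis $g(B)>e^{2\epsilon}$ (equivalently $\epsilon<\tfrac{1}{2}\ln g(B)$) to obtain Eq.~\eqref{eq:L2classicalbound}. Your explicit tracking of the $\sqrt{2}$ through the two-component structure of $\vec y$ is in fact slightly more careful than the paper's intermediate chain, which momentarily drops it before restoring it in the final bound.
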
 
\begin{proof}[Proof of Theorem \ref{thm:thm3}]
	We compute an upper-bound for $\Delta f$ in terms of classification model parameters in $\vec y(\vec x)$ and use $L=\epsilon \kappa/\Delta f<\kappa \ln g(B)/(2\Delta f)$. Please see Appendix~\ref{sec:lemma2proof} for details. 
\end{proof} 

From this we see that we can guarantee only a smaller robustness bound for a more nonlinear kernel (i.e., higher $n$). We can also use a quantum classifier below to realise the same polynomial kernel and find a robustness bound that is now independent of degree of nonlinearity of the kernel. 

\begin{theorem}\label{thm:thm4}
We have a kernel perceptron binary classifier $\vec y(\sigma)=(\vec{y}_0(\sigma),
\vec{y}_1(\sigma)=1-\vec{y}_0(\sigma))^T$ that is realised by a quantum circuit in the absence of noise and takes the form in Figure~\ref{fig:noise1} with $D_{meas}=2$. Without losing generality, we can assume the class label of $\sigma$ is $0$. Now we add depolarisation noise channels $\mathcal{N}_{p_i}$ to the classifier where $i=1,...,l$ to create a noisy classifier $\tilde{\vec y}(\sigma)$. Let us choose $\tilde{\vec y}_0(\sigma)>\exp(2\epsilon)\tilde{\vec y}_1(\sigma)$ and define $B\equiv \vec{y}_0(\sigma)/\vec{y}_1(\sigma)$. Then the noisy classifier is robust under any adversarial perturbation $\sigma \rightarrow \rho$ such that $\tau(\sigma, \rho)\leq \tau_D$ where
\begin{align}
\tau_D< \frac{B-1}{4(B+1)+8(1-p)/p}
\end{align}
for $p=1-p \in (0, 1/2)$ and 
\begin{align}
\tau^2_D <\frac{B-1}{4(B+1)(1-p)/p+8(1-p)^2/p^2}
\end{align}
for $p=1-p \in [1/2, 1)$. 
\end{theorem}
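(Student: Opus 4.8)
The plan is to reduce everything to Corollary~\ref{coro:DP_infi}, which already certifies robustness for a binary quantum classifier with $D_{meas}=2$: the noisy classifier keeps the label $0$ on every $\rho$ within trace distance $\tau_D$ of $\sigma$ as soon as the \emph{noisy} score ratio exceeds $e^{2\epsilon}$, with $\epsilon=\ln(1+2(1-p)\tau_D/p)$. Since the kernel-perceptron circuit here is exactly of the form in Fig.~\ref{fig:noise1} with $D_{meas}=2$, Corollary~\ref{coro:DP_infi} applies verbatim, and the only work left is to translate the abstract condition ``noisy ratio $>e^{2\epsilon}$'' into an explicit bound on $\tau_D$ written through the \emph{noiseless} ratio $B=\vec y_0(\sigma)/\vec y_1(\sigma)$.

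First I would use Eq.~\eqref{eq:ykgraph}, $\tilde{\vec y}_k(\sigma)=p/2+(1-p)\vec y_k(\sigma)$, to express the noisy ratio $g(B)\equiv\tilde{\vec y}_0(\sigma)/\tilde{\vec y}_1(\sigma)$ in terms of $B$. Substituting $\vec y_0=B/(B+1)$, $\vec y_1=1/(B+1)$ and clearing denominators yields
\begin{align}
g(B)=\frac{2B+p(1-B)}{2+p(B-1)},\qquad g(B)-1=\frac{2(1-p)(B-1)}{2+p(B-1)}=:\Delta.
\end{align}
The robustness condition $g(B)>e^{2\epsilon}$ then reads $\Delta>e^{2\epsilon}-1$. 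Writing $u\equiv 2(1-p)\tau_D/p$ so that $e^{\epsilon}=1+u$, the right-hand side is precisely $e^{2\epsilon}-1=2u+u^2$, and the entire certificate collapses to the single scalar inequality $2u+u^2<\Delta$.

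The remaining step is to extract a clean \emph{sufficient} bound on $\tau_D$ (equivalently on $u$) from $2u+u^2<\Delta$, and this is where the two regimes surface. I would bound the quadratic by its dominant piece: for $u\le 2$ one has $u^2\le 2u$, hence $2u+u^2\le 4u$, and demanding $4u<\Delta$ gives $u<\Delta/4$; for $u\ge 2$ one has $2u\le u^2$, hence $2u+u^2\le 2u^2$, and demanding $2u^2<\Delta$ gives $u<\sqrt{\Delta/2}$. Unwinding $u=2(1-p)\tau_D/p$ and inserting $\Delta$ turns $u<\Delta/4$ into $\tau_D<p(B-1)/\!\left(4[2+p(B-1)]\right)$ and $u<\sqrt{\Delta/2}$ into $\tau_D^2<p^2(B-1)/\!\left(4(1-p)[2+p(B-1)]\right)$. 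The two displayed bounds follow at once from the identity $2+p(B-1)=p(B+1)+2(1-p)$, which rewrites the denominators as $4(B+1)+8(1-p)/p$ and $4(B+1)(1-p)/p+8(1-p)^2/p^2$ respectively. The crossover between the two estimates sits at $u=2$; since $\tau_D\le 1$, the strong-noise condition $1-p<1/2$ forces $u\le 2(1-p)/p<2$, placing one automatically in the first (linear-dominated) regime, whereas weak noise $1-p\ge 1/2$ permits $u\ge 2$ and calls for the quadratic bound.

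The main obstacle I anticipate is not any individual inequality but the regime bookkeeping. The splittings $2u+u^2\le 4u$ and $2u+u^2\le 2u^2$ are valid only on $u\le 2$ and $u\ge 2$ respectively, so one must verify that the $\tau_D$ actually certified in each case is consistent with the corresponding range of $u$ — equivalently, that the linear bound is the operative one exactly when $\Delta\le 8$ and the quadratic bound when $\Delta\ge 8$. Making the analytic crossover at $u=2$ line up cleanly with the stated noise thresholds on $1-p$ is the delicate point; the algebraic reductions, by contrast, are routine once the identity $2+p(B-1)=p(B+1)+2(1-p)$ is spotted.
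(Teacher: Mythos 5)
Your proposal is correct and follows essentially the same route as the paper's own proof: both reduce robustness, via $\epsilon$-quantum differential privacy with $\epsilon=\ln(1+2(1-p)\tau_D/p)$, to the single scalar condition that the noisy score ratio $g(B)$ exceed $e^{2\epsilon}$, and then bound $e^{2\epsilon}-1=2u+u^2$ by $4u$ or $2u^2$ according to whether $u\le 2$ or $u\ge 2$, which is exactly the paper's case split at $(1-p)\tau_D/p=1$, followed by the same substitution $\vec y_0(\sigma)=B/(1+B)$. If anything, your regime bookkeeping (crossover at $u=2$, i.e.\ $\Delta=8$, with strong noise $1-p<1/2$ automatically forcing the linear regime) is cleaner than the paper's appendix, which labels the two cases by conditions on $p$ alone in a mutually inconsistent way — an artifact of the leftover ``$p=1-p$'' notation — and your identity $2+p(B-1)=p(B+1)+2(1-p)$ reproduces the stated denominators exactly.
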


\begin{proof}[Proof of Theorem \ref{thm:thm4}]
	We use the expression for $\epsilon$-quantum differential privacy with depolarisation noise that relates $\tau_D$ with $\epsilon$ and relate $\epsilon$ to the fraction $B$. Please see Appendix~\ref{sec:lemma3proof} for details. 
\end{proof}	
The trace distance $\tau_D$ can be turned into a corresponding $l_2$ norm distance $L$ if an encoding of the classical data $\vec x$ into a quantum state $\sigma_{\vec x}$ is chosen. For instance, we can choose the most widely used amplitude encoding $\vec x \rightarrow \sum_{i=1}^D x_i \ket{i}$ where $x_i$ is the $i^{\text{th}}$ element of $\vec x$ and we assume for simplicity the normalisation $||\vec x||_2=1$. Then the trace distance $\tau(\sigma_{\vec x}, \sigma_{\vec x'})=\sqrt{1-\text{Tr}(\sigma_{\vec x}\sigma_{\vec x'})}=\sqrt{1-(\vec x \cdot \vec x')^2}$ and $l_2 \equiv ||\vec x-\vec x'||_2=\sqrt{2-2(\vec x \cdot \vec x')}$. Therefore we can write $\tau(\sigma_{\vec x}, \sigma_{\vec x'})=l_2\sqrt{1-l_2^2/2}\geq l_2$. This means Theorem \ref{thm:thm4} still holds if we replace $\tau_D$ with $L$ and can compare results directly with Theorem \ref{thm:thm3} with the same chosen constant $B$. Then we see how the robustness bound in the classical case is dependent on details of the kernel function like the nonlinearity $n$ whereas the robustness bound can be completely independent of the kernel function.

While in Theorems \ref{thm:thm3} and \ref{thm:thm4} we have provided only sufficient though not necessary conditions for robustness, this was only for the purpose of illustrating a clearer interpretation of robustness in terms of a model parameter like the degree of nonlinearly $n$. Necessary conditions can also be found since we already have the exact expressions for $L$ and $\tau_D$. We know that the former is dependent on the details of the classification model through $\Delta f$ in the most general case whereas the latter is dependent only on $p$, $D_{meas}$ and $\epsilon$, which can be chosen to be constants independent of the details of the kernel or any other classifier. This latter property of $\tau_D$ we have already learned is not consistent with any known classical mechanism for differential privacy. 

Another advantage of the quantum mechanism is that depolarisation noise can occur naturally in quantum systems especially for NISQ devices, whereas the Laplace mechanism needs to be artifically injected into the classifier. From \cite{zhou2017differential} it is known that other quantum noise like amplitude and phase damping and generalised amplitude damping also have the $\epsilon$-quantum differential privacy property in the single-qubit case. It remains exciting work for future investigation to see in the general multiqubit case if other natural sources of quantum noise can be harnessed for adversarial protection. 

\section{Discussion}
We demonstrated how depolarisation noise placed anywhere in a quantum circuit used for classification can be exploited to protect the classification algorithm against arbitrary `worst-case' attacks like adversarial examples. A theoretical bound for robustness can be proved without any assumptions on the type of adversary or the classification model and applies to both quantum and classical data. This bound relies on a new relationship we introduced between quantum differential privacy and adversarial robustness in the quantum setting. In particular, depolarisation noise allows the theoretical robustness bound to be dependent only on the number of classes in the classification model and no other feature of the classifier. However, all known classical noise that can give rise to differential privacy results in robustness bounds that would generally depend on more details of the classification model, for instance the degree of nonlinearity of the classification boundary. 

This result raises many intriguing possibilities for exploring other naturally-occuring quantum noise sources that could offer similar advantages against adversarial attacks, which become pertinent concerns as quantum data are shared in a future quantum internet. We see that the fruitful merging of concepts in security and quantum machine learning potentially leads to quantum advantages that is independent of quantum speedups. This also highlights how noise in the NISQ era for quantum computation can be used as a positive feature and can be employed in parallel with other methods to demonstrate quantum advantage. 
\section*{Acknowledgements}
NL is grateful to Barry Sanders (University of Calgary), Ryan LaRose (Michigan State University), Dong-Ling Deng (Tsinghua University), Jeongho Bang (Korea Institute for Advanced Study) and Srinivasan Arunachalam (IBM Research) for very interesting and fruitful discussions. NL also thanks Dong-Ling Deng for giving useful feedback on this manuscript during her stay in Tsinghua University. 
\bibliographystyle{plain}
\bibliography{myref2}
\newpage 
\appendix
\section{Proof of Lemma 1} \label{sec:lemma1}
Here we prove that if the noiseless quantum classifier assigns $\sigma$ to the class $C$, i.e., $C=\arg \max_k \vec{y}_k(\sigma)$, then the noisy circuit with depolarisation noise also assigns $\sigma$ to the class $C$, i.e. $C=\arg \max_k \tilde{\vec{y}}_k(\sigma)$. This is equivalent to the condition that if $\vec{y}_C(\sigma)>\max_{k \neq C}\vec{y}_k(\sigma)$, then  $\tilde{\vec{y}}_C(\sigma)>\max_{k \neq C}\tilde{\vec{y}}_k(\sigma)$. 

Using Eq.~\eqref{eq:noisedep}
\begin{align}
\mathcal{N}_{p}(\sigma)=\frac{p}{D}\mathbb{I}_{D}+(1-p)\sigma, 
\end{align}
we can rewrite Eq.~\eqref{eq:tildeyk1} as 
\begin{align}\label{eq:tildeyk2}
&\vec{\tilde{y}}_k(\sigma) \equiv \text{Tr}(\Pi_k \mathcal{N}_{p_l}(U_L(...\mathcal{N}_{p_1}(U_1(\sigma \otimes |a\rangle \langle a|)U_1^{\dagger})...))) \nonumber \\
&= \text{Tr}(\Pi_k p\frac{\mathbb{I}}{D})+(1-p)\vec{y}_k(\sigma) \nonumber \\
&=\frac{p}{K}+(1-p)\vec{y}_k(\sigma),
\end{align}
where $k=1, 2, ...,K$ and $p \equiv 1-\prod_{i=1}^l (1-p_i)$. The second line can be readily derived by induction. Then if $\vec{y}_C(\sigma)>\max_{k \neq C}\vec{y}_k(\sigma)$, Eq.~\eqref{eq:tildeyk2} implies 
\begin{align}\label{eq:Append_1_3}
& \tilde{\vec{y}}_C(\sigma)=\frac{p}{K}+(1-p)\vec{y}_C(\sigma) \nonumber \\
&>\frac{p}{K}+(1-p)\max_{k \neq C}\vec{y}_k(\sigma) \nonumber \\
&=\max_{k \neq C}\left(\frac{p}{K}+(1-p)\vec{y}_k(\sigma)\right)=\max_{k\neq C}\tilde{\vec{y}}_k(\sigma).
\end{align} 
Conversely, if $\tilde{\vec{y}}_C(\sigma)>\max_{k \neq C}\tilde{\vec{y}}_k(\sigma)$, then from Eq.~\eqref{eq:tildeyk2} it is clear that $\vec{y}_C(\sigma)>\max_{k \neq C}\vec{y}_k(\sigma)$ is true also. \qed 

\section{Proof of Proposition 1} \label{sec:proposition1}
From Lemma \ref{lem:K-multi-classification}, we know that if $\sigma$ is labelled as $C$ in the noiseless circuit then in the infinite sampling limit this label is maintained in the corresponding circuit with depolarisation noise, so $\tilde{\vec{y}}_C(\sigma)>\max_{k \neq C}\tilde{\vec{y}}_k(\sigma)$. However, in the finite sampling limit with sample complexity $N$, we only have access to the estimate $\tilde{\vec{y}}_k^{(N)}(\sigma)$. So we want to find the smallest $N$ so $\tilde{\vec{y}}_C^{(N)}(\sigma)>\max_{k \neq C} \tilde{\vec{y}}_k^{(N)}(\sigma)$ with probability at least $\beta$. 

From results in Lemma \ref{lem:K-multi-classification}, we see that since $\tilde{\vec{y}}_k(\sigma)=1-p/K+p\vec y_k(\sigma)$ and $\xi \equiv \vec{y}_C(\sigma)-\max_{k \neq C} \vec{y}_k(\sigma)$, then $\eta \equiv \tilde{\vec{y}}_C-\max_{k \neq C} \tilde{\vec{y}}_k(\sigma)=p\xi$. Thus we need large enough sampling to resolve the difference $\tilde{\vec{y}}_C^{(N)}(\sigma)-\max_{k \neq C} \tilde{\vec{y}}_k^{(N)}(\sigma)$ to at least $2\eta=2p\xi$. It is then sufficient to find $N$ that estimates $\tilde{\vec{y}}_k^{(N)}(\sigma)$ to precision $2\eta$. To find $N$, we can employ Hoeffding's inequality in the following. \\
\noindent \textbf{Lemma A.} (Hoeffding's  inequality \cite{mohri2018foundations})\label{lem:hoef}
Let $Z_1,...,Z_N$ be independent bounded random variables	with $Z_i\in[a,b]$ for all $i\in[N]$, where  $-\infty< a \leq b < \infty$. Then the probability 
\begin{align} \label{eq:hoeffding}
	\Pr\left(\left| \frac{1}{N}\sum_{i=1}^NZ_i-\mathbb{E}(Z_i) \right| \leq \zeta \right)\geq 1-2\exp\left(-\frac{2N\zeta^2}{(b-a)^2} \right).
\end{align}
In our case, we can use $b-a=1$, $\zeta=2\eta$, $(1/N)\sum_{i=1}^N Z_i=\tilde{\vec{y}}_k^{(N)}(\sigma)$ and $\mathbb{E}(Z_i)=\tilde{\vec{y}}_k(\sigma)$. Thus if we require the probability $\Pr(|\tilde{\vec{y}}_k^{(N)}(\sigma)-\vec{y}_k(\sigma)|<2\eta)\geq \beta$, it is sufficient to require $1-2\exp(-8N(1-p)^2\xi^2)\sim \beta$, or eequivalently
\begin{align}
N \sim \frac{1}{8(1-p)^2\xi^2}\ln \left(\frac{2}{1-\beta}\right). 
\end{align}
\qed 
\section{Proof of Lemma 2} \label{sec:lemma2proof} 
This proof follows Zhou and Ying \cite{zhou2017differential}, applied to the case where the dimension of the final projector is $D_{meas}$ and we can apply multiple depolarisation channels $\mathcal{N}_{p_i}$ for $i=1,...,l$ where $p \equiv 1-\prod_{i=1}^l (1-p_i)$. To show $\epsilon$-differential privacy, we must show that when $\tau(\sigma,\rho)\leq \tau_D$, the following relation must hold, i.e., 
\begin{align}
e^{-\epsilon} \leq \frac{\tilde{\vec y}_k(\rho)}{\tilde{\vec y}_k(\sigma)}\leq e^{\epsilon},
\end{align}
where from Eq.~\eqref{eq:tildeyk1}
\begin{align}\label{eqn:lem1_1}
	\tilde{\vec y}_k(\rho)=\Tr( \Pi_k (\mathcal{N}_{p_l}(U_l(...\mathcal{N}_{p_1}(U_1(\rho)U_1^{\dagger})...U_l^{\dagger})))).
\end{align}
By employing the definition of depolarisation noise with noise parameter $p$ acting on an arbitrary quantum state $\sigma$, from Eq.~\eqref{eq:noisedep}    
\begin{align}
\mathcal{N}_{p}(\sigma)=\frac{p}{D}\mathbb{I}_{D}+(1-p)\sigma, 
\end{align}
we can derive
\begin{align}\label{eqn:lem1_3}
& \tilde{\vec y}_k(\rho)=\frac{p(D-D_{meas})}{D}\Tr(\Pi_k) \nonumber \\
&+(1-)p\text{Tr}(\mathbb{I}_{D-D_{meas}}\otimes \Pi_k U(\rho)U^{\dagger}), 
\end{align}
and similarly for $\tilde{\vec y}_k(\sigma)$. From this we can write 
\begin{align}
&\frac{\tilde{\vec y}_k(\rho)}{\tilde{\vec y}_k(\sigma)}-1= \nonumber \\
&(1-p)\text{Tr}(U(\rho-\sigma)U^{\dagger})\mathbb{I}_{D-D_{meas}}\otimes \Pi_k)/\nonumber\\
&\left(\frac{p(D-D_{meas})}{D}\text{Tr}(\Pi_k)+F\right) \nonumber \\
&\nonumber \\
\leq &\frac{(1-p)\tau_D \text{Tr}(\mathbb{I}_{D-D_{meas}}\otimes \Pi_k)}{\frac{p(D-D_{meas})}{D}\text{Tr}(\Pi_k)} \nonumber\\
 = & \frac{1-p}{p}D_{meas}\tau_D~,
\end{align}
where $F \equiv (1-p)\text{Tr}(\mathbb{I}_{D-D_{meas}}\otimes \Pi_k U(\sigma)U^{\dagger})>0$. In the first inequality we used the relation $\text{Tr}(U(\rho-\sigma)U^{\dagger}\Lambda_k)\leq \tau_D \text{Tr}(\Lambda_k)$ and the inequality $\tau(U(\sigma)U^{\dagger}, U(\rho)U^{\dagger})\leq \tau(\sigma,\rho)\leq \tau_D$ \cite{zhou2017differential, nielsen2010quantum}.    

To satisfy Eq.~(\ref{eqn:lem1_1}), we upper-bound this final term by $e^{\epsilon}-1$ and find the privacy budget 
\begin{align}
	& \frac{1-p}{p}D_{meas}\tau_D \leq e^{\epsilon}-1 \nonumber \\
	\Rightarrow & \epsilon = \ln\left(1+D_{meas}\tau_D\frac{1-p}{p}\right). 
\end{align} 
\section{Proof of Theorem 1} \label{sec:appendixc}
Here we prove that if $\tilde{\vec{y}}_k(\sigma)>e^{2\epsilon}\max_{k \neq C}\tilde{\vec{y}}_k(\sigma)$ where $\epsilon=\ln(1+D_{meas}(1-p)\tau_D/p)$, then $\tilde{\vec{y}}_C(\rho)>\max_{k \neq C}\tilde{\vec{y}}_k(\rho)$ for all $\rho$ where $\tau(\sigma,\rho)\leq \tau_D$. First we employ Lemma \ref{lem:DP-robust}, which states that given depolarisation noise with parameter $p$, the algorithm implemented by the noisy circuit has $\epsilon$-quantum differential privacy. Then from Eq.~\eqref{eq:def4example} following Definition \ref{def:QDP}, we see that in our case it states
\begin{align} \label{eq:ytilde}
e^{-\epsilon}\leq \frac{\tilde{\vec{y}}_k(\rho)}{\tilde{\vec{y}}_k(\sigma)}\leq e^{\epsilon},
\end{align} 
which holds true for when $\epsilon=\ln(1+D_{meas}(1-p)\tau_D/p)$ and all $\rho$ where $\tau(\sigma,\rho)\leq \tau_D$. Then if we insert $\tilde{\vec{y}}_k(\sigma)>e^{2\epsilon}\max_{k \neq C}\tilde{\vec{y}}_k(\sigma)$ into the above we can write 
\begin{align}
\tilde{\vec{y}}_k(\rho)\geq e^{-\epsilon}\tilde{\vec{y}}_k(\sigma)>e^{\epsilon}\max_{k \neq C}\tilde{\vec{y}}_k(\sigma).
\end{align}
Then the left-hand side inequality in Eq.~\eqref{eq:ytilde}, we find 
\begin{align}
\tilde{\vec{y}}_k(\rho)\geq \max_{k\neq C}\tilde{\vec{y}}_k(\rho).
\end{align}
From Lemma \ref{lem:K-multi-classification}, we see that this is also equivalent to the claim $\vec{y}_k(\rho)\geq \max_{k\neq C}\vec{y}_k(\rho)$. \qed
\section{Proof of Theorem 2} \label{sec:appendixd}
From Hoeffding's inequality (see Eq.~\eqref{eq:hoeffding} in Lemma A of Appendix~\ref{sec:proposition1}), it is clear that
\begin{align} \label{eq:yineq1}
\tilde{\vec{y}}_k^{(N)}(\sigma)-\zeta \leq \tilde{\vec{y}}_k(\sigma)\leq \tilde{\vec{y}}_k^{(N)}(\sigma)+\zeta
\end{align}
to probability greater than $1-2\exp(-2N\zeta^2)$. In the statement of Theorem \ref{thm:DP_fini}, we assume $\tilde{\vec{y}}_C^{(N)}(\sigma)-\zeta>e^{2\epsilon}\max_{k \neq C}(\tilde{\vec{y}}_k^{(N)}(\sigma)+\zeta)$. Inserting the above, this implies
\begin{align}
& \tilde{\vec{y}}_C(\sigma) \geq \tilde{\vec{y}}_C^{(N)}(\sigma)-\zeta \nonumber \\
&>e^{2\epsilon}\max_{k \neq C}(\tilde{\vec{y}}_k^{(N)}(\sigma)+\zeta)\geq e^{2\epsilon}\max_{k \neq C}\tilde{\vec{y}}_k(\sigma)
\end{align}
is true to probability at least $1-2\exp(-2N\zeta^2)$. From Theorem \ref{thm:DP_infini} we know that the above inequality $\tilde{\vec{y}}_C(\sigma)>e^{2\epsilon}\max_{k \neq C}\tilde{\vec{y}}_k(\sigma)$ leads to the condition $C=\arg \max_k \tilde{\vec{y}}_k(\rho)=\arg \max_k \vec{y}_k(\sigma)$ for all $\rho$ where $\tau(\sigma,\rho)\leq \tau_D$ and $\epsilon=\ln(1+D_{meas}(1-p)\tau_D/p)$. 

Then using Eq.~\eqref{eq:yineq1} again in the condition $C=\arg \max_k \tilde{\vec{y}}_k(\rho)$, equivalent to $ \tilde{\vec{y}}_C(\rho)>\max_{k \neq C}\tilde{\vec{y}}_k(\rho)$, we find 
\begin{align}
\tilde{\vec{y}}_C^{(N)}(\sigma)+\zeta>\max_{k \neq C}\tilde{\vec{y}}_k^{(N)}(\sigma)+\zeta
\end{align}
to probability at least $1-2\exp(-2N\zeta^2)$. 
\qed
\section{Proof of Theorem 3} \label{sec:theorem3proof}
We first observe that for integers $n>0$ and numbers $u$ and $v$ we have $u^n-v^n=(u-v)\sum_{j=0}^n u^j v^{n-1-j}$. If we assume $|u|, |v|\leq G$, this then implies $|u^n-v^n|\leq |u-v|nG^{n-1}$. Let $q(u_i)=a_i u_i^n$ so 
\begin{align} \label{eq:ineq1}
& |\sum_{i=1}^M q(u_i)-q(v_i)| \leq \sum_{i=1}^M |q(u_i)-q(v_i)| \nonumber \\
&\leq \sum_{i=1}^M|a_i||u_i^n-v_i^n|\leq \sum_{i=1}^M|a_i| |u_i-v_i|nG_i^{n-1},
\end{align}
where all $|u_i|, |v_i|\leq G_i$. In our case we can define $a_i=w^*_i y_i$, $u_i=\vec x^*_i \cdot \vec x$,  $v_i=\vec x^*_i \cdot \vec x'$. Suppose we fix a normalisation $||\vec x_i^*||_2=1=||\vec x||_2=|\vec x'||_2$. From the Cauchy-Schwarz inequality, $|u_i|=|\vec x^*_i \cdot \vec x|\leq ||\vec x_i^*||_2||\vec x||_2=1$ and similarly $|v_i|\leq 1$, so it is sufficient for us to choose $G_i=1$. We now want to compute the sensitivity for the kernel perceptron model where the sensitivity is defined in Eq.~\eqref{eq:deltaf}
\begin{align}
\Delta f=\max_{\vec x, \vec x'}||f(\vec x)-f(\vec x')||_2/||\vec x-\vec x'||_2,
\end{align}
where in our case of the polynomial kernel $f(\vec x)=\vec y(\vec x)=(\vec y_0(\vec x), 1-\vec y_0(\vec x))^T$ and $\vec y_0(\vec x)=\sum_{i=1}^M w^*_i y^*_i(\vec x^*_i \cdot \vec x)^n$. Then it is straightforward to show 
\begin{align}
\Delta f=\max_{\vec x, \vec x'} \sqrt{2}\frac{|\sum_{i=1}^M w^*_i y^*_i (K(\vec x^*_i, \vec x)-K(\vec x^*_i, \vec x'))|}{||\vec x-\vec x'||_2}.
\end{align}
Using Eq.~\eqref{eq:ineq1} for the polynomial kernel, we obtain
\begin{align} \label{eq:deltafineq}
& \Delta f \leq \sum_{i=1}^M \frac{|a_i| |u_i-v_i|}{||\vec x-\vec x'||_2}n=\sum_{i=1}^M \frac{|a_i| |\vec x^*_i\cdot (\vec x-\vec x'_i)|}{||\vec x-\vec x'||_2}n \nonumber \\
&\leq \sum_{i=1}^M\frac{|a_i| ||\vec x^*_i||_2||\vec x-\vec x'_i||_2}{||\vec x-\vec x'||_2}n \nonumber \\
&=\sum_{i=1}^M|w^*_i y_i|n\leq M \max\{|w^*_i y_i|\}_{i=1}^Mn,
\end{align}
where we used the normalisation $||\vec x^*_i||_2=1$ in the last line. In the special case of the linear kernel (or $n=1$) we have $\Delta f \leq M \max\{|w_y y_i|\}_{i=1}^M$. \\

From Eq.~\eqref{eqn:lap_mec} in the text, 
\begin{align}
\kappa=\frac{\Delta f L}{\epsilon}.
\end{align}
This means that the classifier is robust against all adversarial examples $\vec x'$ where $||\vec x'-\vec x||_2\leq L=\kappa \epsilon/\Delta f$. In our theorem, we require the condition $g(B) \equiv \tilde{\vec y}_0(\vec)/\tilde{\vec y}_1(\vec x) >\exp(2\epsilon)$ where $B \equiv \vec y_0(\vec x)/\vec y_1(\vec x)$, which gives $\epsilon<(1/2) \ln g(B)$. Together with $1/\Delta f \geq 1/(M \max\{|w^*_i y_i|\}_{i=1}^Mn)$ from Eq.~\eqref{eq:deltafineq}, this implies $||\vec x'-\vec x||_2 <\kappa \ln g(B)/(2\sqrt{2}M \max\{|w^*_i y_i|\}_{i=1}^Mn)$ is a sufficient condition for robustness.  
\section{Proof of Theorem 4} \label{sec:lemma3proof}
Following the results of Lemma \ref{lem:K-multi-classification}, when the depolarisation noise layer is inserted into the trained model just before the final measurement, the classifier $y(\sigma)$ has the $\epsilon$-differential privacy property where
\begin{align} \label{eq:DPconditionslast}
	&e^{-\epsilon}\tilde{\vec y}_0(\sigma) < \tilde{\vec y}_0(\rho)\leq e^{\epsilon}\tilde{\vec y}_0(\sigma) \nonumber \\
& e^{-\epsilon}\tilde{\vec y}_1(\sigma) \leq  \tilde{\vec y}_1(\rho)\leq e^{\epsilon}\tilde{\vec y}_1(\sigma).
\end{align}
Now, if the initial class label of $\sigma$ is 0, to correctly predict the attacked input $\rho$ requires
\begin{align}
	\tilde{\vec y}_0(\rho) > \tilde{\vec y}_1(\rho)=1-\tilde{\vec y}_0(\rho).   
\end{align} 
In combination with Eq.~\eqref{eq:DPconditionslast}, this robustness condition is equivalent to $(1+e^{2\epsilon})\tilde{\vec y}(\sigma) > e^{2\epsilon}$ or 
\begin{align} \label{eq:2epsiloncondition}
\tilde{\vec y}_0(\sigma)/\tilde{\vec y}_1(\sigma)>e^{2\epsilon}.
\end{align}
By including depolarisation channels with corresponding depolarisation parameters $p_1,...,p_l$, we can write 
\begin{align} \label{eqn:lemm1_3}
\tilde{\vec y}_0(\sigma)=p/2+(1-p)\vec y_0(\sigma),
\end{align}
where $p=1-p$. Then inserting Eq.~\eqref{eqn:lemm1_3} into Eq.~\eqref{eq:2epsiloncondition} we find
\begin{align}\label{eqn:lemm1_5}
 &  \frac{p}{2}+ (1-p)\vec y_0(\sigma) > e^{2\epsilon}\left(    \frac{p}{2} + (1-p)\vec y_1(\sigma)\right) \nonumber\\
   \Leftrightarrow & \frac{ \frac{p}{2}+ (1-p)\vec y_0(\sigma)}{ \frac{p}{2} + (1-p)\vec y_1(\sigma)} > e^{2\epsilon}\nonumber\\
   \Leftrightarrow & 1 + \frac{ (1-p)(\vec y_0(\sigma)-\vec y_1(\sigma))}{ \frac{p}{2} + (1-p)\vec y_1(\sigma)} >  \left(1+2\frac{1-p}{p}\tau_D \right)^2  \nonumber\\
   \Leftrightarrow &  \frac{ (1-p)(\vec y_0(\sigma)-\vec y_1(\sigma))}{ \frac{p}{2} + (1-p)\vec y_1(\sigma)} > 4\frac{1-p}{p}\tau_D +4\frac{(1-p)^2}{p^2}\tau_D^2   ~.
\end{align}
where we used $\epsilon=\ln(1+2\frac{1-p}{p}\tau_D)$ in the second line, which is a result from Lemma \ref{lem:K-multi-classification}. We can distinguish the following cases: 
\begin{enumerate}
	\item If  $(1-p)\tau_D/p<1$, we replace the right side of Eq.~(\ref{eqn:lemm1_5}) by its upper bound, i.e., 
	\begin{align}\label{eqm:lemma1_7}
	&	 \frac{ (1-p)(\vec y_0(\sigma)-\vec y_1(\sigma))}{ \frac{p}{2} + (1-p)\vec y_1(\sigma)} > 8\frac{1-p}{p}\tau_D   \nonumber\\
	\Leftrightarrow 	 &\frac{ (\vec y_0(\sigma)-\vec y_1(\sigma))}{ 4 + 8\frac{(1-p)}{p}\vec y_1(\sigma)} > \tau_D~;
	\end{align}
	\item If  $(1-p)\tau_D/p > 1$, or equivalently $p\in(0,1/2)$, we replace the right side of Eq.~(\ref{eqn:lemm1_5}) by its upper bound, i.e.,  
	\begin{align}\label{eqm:lemma1_8}
	&	 \frac{ (1-p)(\vec y_0(\sigma)-\vec y_1(\sigma))}{ \frac{p}{2} + (1-p)\vec y_1(\sigma)} > 8\frac{(1-p)^2}{p^2}\tau_D^2   \nonumber\\
	\Leftrightarrow 	 &\frac{ (\vec y_0(\sigma)-\vec y_1(\sigma))}{ 4\frac{(1-p)}{p} + 8\frac{(1-p)^2}{p^2} \vec y_1(\sigma)} > \tau_D^2~.     
	\end{align}   
\end{enumerate}
The definition $B \equiv \vec{y}_0(\sigma)/\vec{y}_1(\sigma)$ implies  
\begin{align}\label{eqm:lemma1_4}
  \vec y_0(\sigma)= B \vec y_1(\sigma)  
\Leftrightarrow   \vec y_0(\sigma)=B/(1+B).
\end{align}
Inserting this into Eqs.~\eqref{eqm:lemma1_7} and ~\eqref{eqm:lemma1_8} we have 
\begin{align}
	\frac{B-1 }{ 4(B+1) + 8\frac{(1-p)}{p}} > \tau_D
\end{align}
for the first case  $p\in(0,1/2)$, and  
\begin{align}
	\frac{B-1}{ 4\frac{(1-p)}{p}(B+1) + 8\frac{(1-p)^2}{p^2} } > \tau_D^2
\end{align}
for the second case $p\in[1/2,1)$. 
\section{Numerical simulation details} \label{app:h}
In this section, we explain how our quantum classifier is implemented and then use a generic metric to evaluate the performance of our defense protocol.
\subsection{Implementation of quantum classifier}
Our quantum classifier is composed of four main ingredients, i.e., the unitary $U_{\text{prep}}$ for state preparation, the parameterized quantum circuits $U(\bm{\theta})$ where $\bm{\theta}$ are to be optimised, the final projective measurements $|0\rangle \langle 0|$ and $|1\rangle \langle 1|$ in the $\sigma_z$ basis and the depolarisation channel $\mathcal{N}_{p}$ that is conditionally applied at testing time. Note that it doesn't matter where the depolarisation channel is placed in the circuit since results only depend on the product $\prod_{i=1}^l(1-p_i)$, where $p\equiv 1-\prod_{i=1}^l(1-p_i)$. Our circuit is shown in Figure~\ref{fig:QNN_Iris}, composed of two qubits, where each entry of the classical input vector is separately encoded into the amplitude of the quantum state in the computational basis. The state preparation unitary  $U_{\text{prep}}$, i.e., the computation of parameters $\{\vec x_i\}_{i=1}^5$, follows from \cite{plesch2011quantum}. This $U(\bm{\theta})$ is composed of 5 `layers', where each layer consist of trainable single-qubit gates and two-qubits gates as shown in the upper right panel of Figure~\ref{fig:QNN_Iris}, highlighted by the dashed box. The layers are then sequentially applied to form $U(\bm{\theta})$ \cite{benedetti2019parameterized}. The mathematical representation of $U(\bm{\theta}_{i,1}) = R_Z(\bm{\theta}_{i,1})R_Y(\bm{\theta}_{i+1,1})R_Z(\bm{\theta}_{i+2,1})$ and the total number of trainable parameters is $25$. 

\begin{figure}
	\centering
\includegraphics[width=0.48\textwidth]{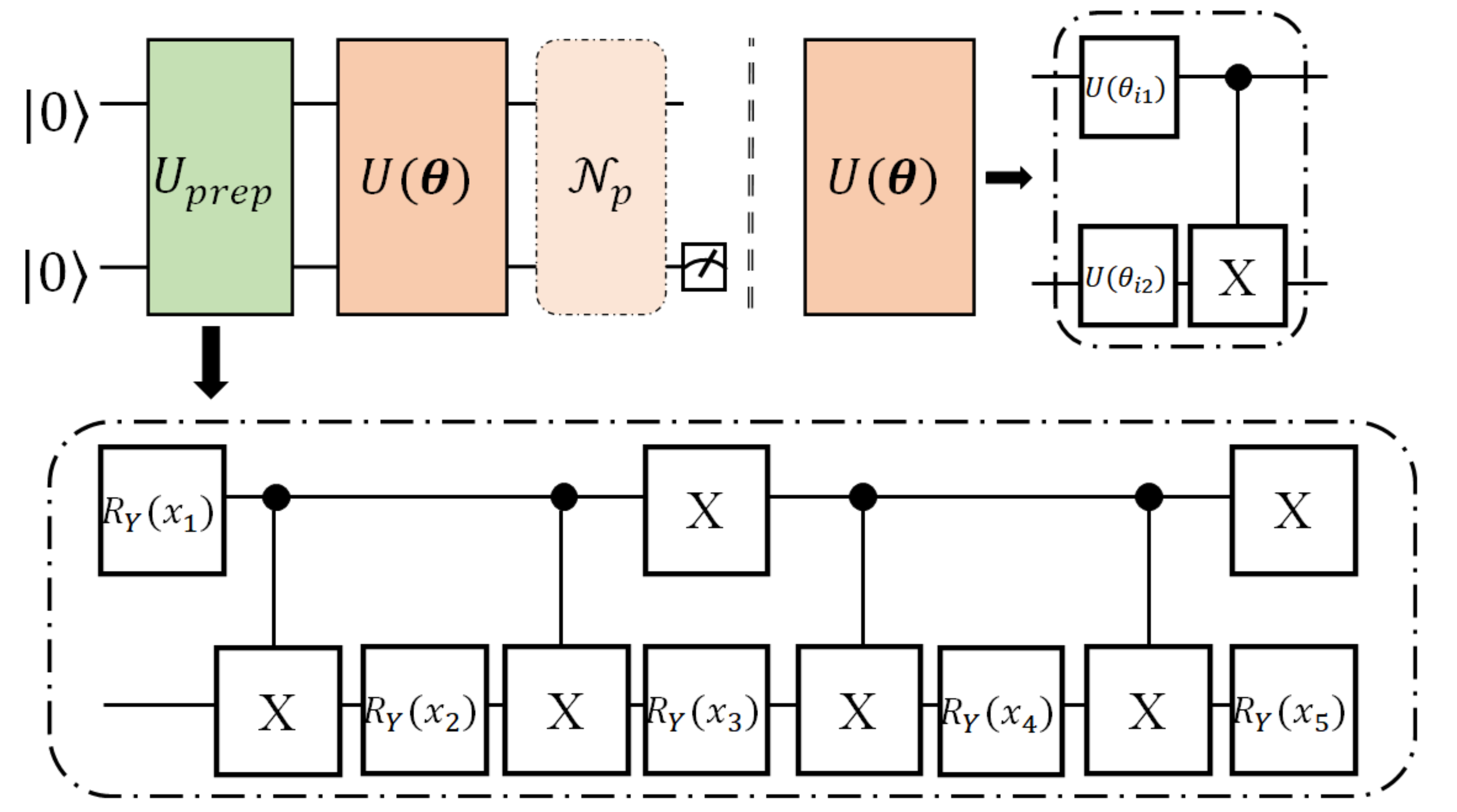}
\caption{\small{\textit{Our binary QNN classifier.} The upper left panel shows the main structure of the quantum classifier and the upper right panel illustrates one `layer' in the implementation of the trainable unitary and we employ 5 `layers' in total. The lower panel is the quantum circuit that encodes the classical input data into a quantum state.}}
\label{fig:QNN_Iris}
\end{figure}

\subsection{Evaluation}
An evaluation metric broadly used in classical adversarial learning is the conventional accuracy, which measures the prediction accuracy of the test dataset under adversarial attacks with respect to different bounded-norms \cite{lecuyer2019certified,wong2018provable}. The mathematical expression for the conventional accuracy $Acc_c$ is  
 \begin{equation}\label{eqn:Acc_c}
 	Acc_c = \frac{\sum_{i=1}^{|D_{\text{Te}}|} \bf{1}_{\bar{c}_i=c^*_i}}{|D_{\text{Te}}|}~,
 \end{equation}
where $|D_{\text{Te}}|$ is the size of the test dataset, $\bar{c}_i$ and $c^*_i$ are the predicted and real labels the $i^{\text{th}}$ test example. Here $\bf{1}_{\bar{c}_i=c^*_i}$ is the indicator function, which takes the value `$1$' when $ \bar{c}_i=c^*_i$ and is `$0$' otherwise. 
\begin{figure}[h!]
\centering 
\includegraphics[width=0.49\textwidth]{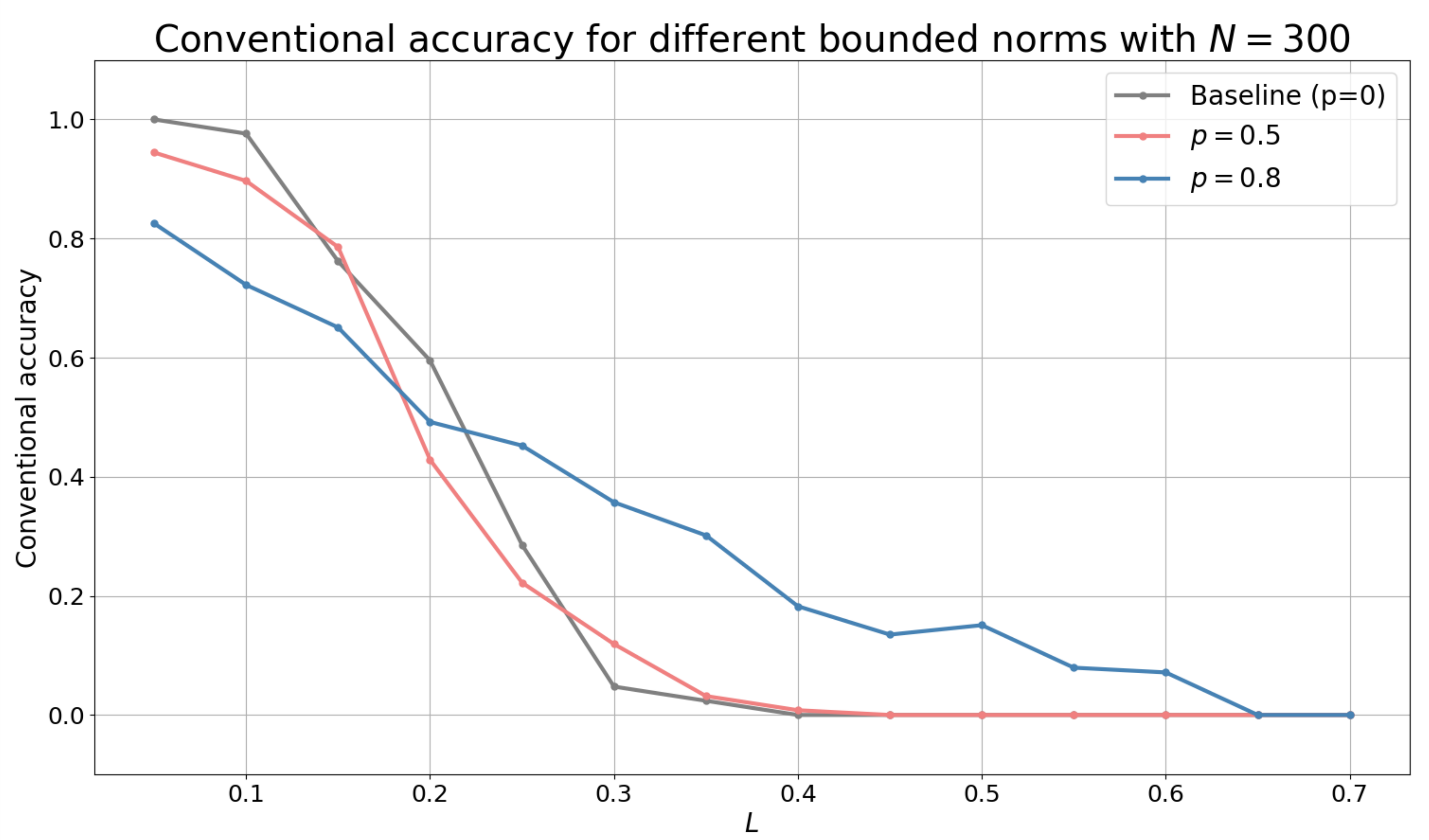} 
\caption{\small{\textit{Conventional accuracy for different depolarisation noise $p$}. We denote $L$ as the maximum $l_2$ bounded-norm used in the adversarial attack.  The conventional accuracy corresponding to $p=0.5, 0.8$ is with respect to $L$ is in red and blue respectively. The label `baseline' refers to the conventional accuracy with when $p=0$.}}
\label{fig:conv_acc} 
\end{figure}
Using the depolarisation noise $p= 0.5, 0.8$ and $\tau_D=0.015$, we explore the tradeoff between adversarial robustness and the conventional accuracy for our classifier. Let $L\in(0, 0.7]$ and $n_{samp}=300$. The number of iterations used to generate adversarial attacks is set to $50$ without early stopping.  Figure~\ref{fig:conv_acc} illustrates the simulation results under $p=0, 0.5, 0.8$. We can see how our protocol increases the robustness against $l_2$ norm attacks with increasing $p$. For instance, the conventional accuracy of our baseline ($p=0$) drops to zero when $L=0.4$, while the conventional accuracy remains non-zero for both $p=0.5$ and $p=0.8$. In addition, a larger depolarisation noise $p$ promises a better robustness against large $L$. Specifically, when $L=0.1$, the conventional accuracy when $p=0.8$ is slightly less than when $p=0.5$. However, with increased $L$, the conventional accuracy when $p=0.8$ outperforms the case when $p=0.5$. Also when $L=0.5$, both baseline and $p=0.5$ cases have the zero conventional accuracy, while the setting $p=0.8$ gives non-zero conventional accuracy.


\end{document}